\xpatchcmd{\algorithmic}{\ALG@tlm\z@}{\ALG@tlm\z@\leftmargin 0pt}{}{}
\algrenewcommand\algorithmicindent{1em}
\definecolor{jkblue}{rgb}{0,0.2,0.7}
\newlength\figureheight
\newlength\figurewidth
\newcommand{\var}{\operatorname{var}}
\newcommand{\RESS}{\text{RESS}}
\newcommand{\CAR}{\text{CAR}}
\newcommand{\E}{\mathds{E}}
\newcommand{\F}{\mathcal{F}}
\newcommand{\n}{{(n)}}
\newcommand{\m}{{(m)}}
\newtheorem{lemma}{Lemma}
\newtheorem{theorem}{Theorem}
\title{Probabilistic programming for birth-death models of evolution using an alive particle filter with delayed sampling}
\author{%
Jan Kudlicka\\
Uppsala University\\
Uppsala, Sweden\\
\and
Lawrence M. Murray\\
Uber AI Labs\\
San Francisco, CA, USA\\
\and
Fredrik Ronquist\\
Swedish Museum\\ of Natural History\\
Stockholm, Sweden\\
\and
Thomas B.~Sch\"on\\
Uppsala University\\
Uppsala, Sweden\\
}
\date{}
\begin{document}

\maketitle

{\small\textbf{Please cite this version:}

J.~Kudlicka, L.~M. Murray, F.~Ronquist, and T.~B. Sch\"on.
\newblock Probabilistic programming for birth-death models of evolution using
  an alive particle filter with delayed sampling.
\newblock In \emph{Conference on Uncertainty in Artificial Intelligence}, 2019.

\begin{center}
\begin{minipage}{.9\linewidth}
\begin{verbatim}
@inproceedings{kudlicka2019probabilistic,
  title={Probabilistic programming for birth-death models of evolution using
         an alive particle filter with delayed sampling},
  author={Kudlicka, Jan and Murray, Lawrence M. and Ronquist, Fredrik and
          Sch\"on, Thomas B.},
  booktitle={Conference on Uncertainty in Artificial Intelligence},
  year={2019}
}
\end{verbatim}
\end{minipage}
\end{center}}

\begin{abstract}
We consider probabilistic programming for birth-death models of evolution and introduce a new widely-applicable inference method that combines an extension of the alive particle filter (APF) with automatic Rao-Blackwellization via delayed sampling. Birth-death models of evolution are an important family of phylogenetic models of the diversification processes that lead to evolutionary trees. Probabilistic programming languages (PPLs) give phylogeneticists a new and exciting tool: their models can be implemented as probabilistic programs with just a basic knowledge of programming. The general inference methods in PPLs reduce the need for external experts, allow quick prototyping and testing, and accelerate the development and deployment of new models. We show how these birth-death models can be implemented as simple programs in existing PPLs, and demonstrate the usefulness of the proposed inference method for such models. For the popular BiSSE model the method yields an increase of the effective sample size and the conditional acceptance rate by a factor of 30 in comparison with a standard bootstrap particle filter. Although concentrating on phylogenetics, the extended APF is a general inference method that shows its strength in situations where particles are often assigned zero weight. In the case when the weights are always positive, the extra cost of using the APF rather than the bootstrap particle filter is negligible, making our method a suitable drop-in replacement for the bootstrap particle filter in probabilistic programming inference.
\end{abstract}

\section{Introduction}

The development of new probabilistic models of evolution is an important part of statistical phylogenetics. These models require inference algorithms that are able to cope with increased model complexity as well as the larger amount of observational data available today. Experts from several fields typically need to be involved, both to design bespoke inference algorithms, and to implement the new models and the inference algorithms in existing software or to develop new software from scratch. Probabilistic programming languages (PPLs) \citep[e.g.,][]{goodman2008church,tolpin2016design,mansinghka2014venture,paige2014compilation} have the potential to accelerate this: generative models are specified as simple programs and compiled into executable applications that include general inference engines. Writing models in PPLs requires just basic programming skills, and thus allows quick prototyping and testing.

Quite a few software applications for statistical phylogenetics exist today, including the popular MrBayes \citep{huelsenbeck2001mrbayes} and BEAST \citep{drummond2007beast}. They typically take a Bayesian approach and implement Markov chain Monte Carlo inference \citep[see review by][]{nascimento2017biologist}. Most of these applications do not allow the user to specify models outside of a predefined model space, which can be quite narrow. Even when adding new models is possible, it is usually a challenging task requiring not only good programming skills but also detailed knowledge of the software design and implementation. 

Statistical phylogeneticists recognize the benefits of software that supports the addition of new models and inference methods. For example, the design of BEAST 2 \citep{bouckaert2014beast} allows users to create and use custom modules. RevBayes \citep{hohna2016revbayes} goes even further: it uses a domain-specific probabilistic programming language for phylogenetics based on probabilistic graphical models \citep[e.g.,][]{koller2009probabilistic}. However, the language is not Turing-complete, which means it has some limitations. For example it does not allow unbounded recursion.

In this paper we concentrate on birth-death models of evolution, an important family of phylogenetic models. In these models, births correspond to lineage splits (speciation events) and deaths to extinction events. These models specify probability distributions of evolutionary trees and the task is to infer model parameters given a part of a complete tree that represents evolution of currently living species.

We take a step toward using PPLs in statistical phylogenetics: our main contribution is a new general inference algorithm based on an extension of the alive particle filter (APF) \citep{moral2015alive} combined with automatic Rao-Blackwellization via delayed sampling \citep{murray2018delayed}. We also show how to implement birth-death models of evolution in existing PPLs, and show the usefulness of our inference algorithm for such models. Interestingly, by using this algorithm we avoid sampling of birth and death rates. We believe that the algorithm may be of interest for other models with highly-informative observations. Finally, we prove that the estimator of the marginal likelihood in the extended APF is unbiased.

The rest of the paper is organized as follows: in Section~\ref{sec:background} we give a brief recapitulation of basic concepts in evolution and introduce probabilistic programming in more detail. We derive our inference algorithm and show how phylogenetic birth-death algorithms can be implemented in PPLs in Section~\ref{sec:methods}. We give implementations of two well-known phylogenetic birth-death models and compare several general inference algorithms for these models in Section~\ref{sec:experiments}. We offer some conclusions and ideas for future research in Section~\ref{sec:conclusion}.

\section{Background}
\label{sec:background}

\subsection{Speciation, extinction and phylogenies}
\label{subsec:evolution}

There are two types of events that play a significant role in the evolution of any species.
\begin{itemize}
    \item \emph{Speciation} occurs when the population of one species splits and eventually forms two new species.
    \item \emph{Extinction} occurs when the whole population of one species dies out. Species that are not extinct, i.e., species with individuals alive at the present time, are called \emph{extant}.
\end{itemize}

In phylogenetics, the \emph{before present} (BP) time is usually used for dating, i.e., if an event happened at time $\tau$ it means it happened $\tau$ time units ago. 

The result of an evolutionary process is a binary tree called the \emph{complete phylogeny}. A very simple example of a complete phylogeny is depicted in Figure~\ref{fig:complete_reconstructed}a. The nodes represent events and species at significant times:
\begin{itemize}
    \item the root node represents the most recent common ancestor (MRCA) of all species of interest,
    \item an internal node represents a speciation event,
    \item a leaf at $\tau=0$ (i.e. the present time) represents an extant species,
    \item a leaf at $\tau>0$ (i.e. in the past) represents an extinction event.
\end{itemize}
The length of edges---or \emph{branches} as they are called in phylogenetics---is the difference between the time of the parent and the child node.

The \emph{reconstructed} phylogeny is obtained from a complete tree by removing all subtrees that involve only extinct species. We will refer to this as pruning. An example of a reconstructed tree is depicted in Figure~\ref{fig:complete_reconstructed}b.

The reconstructed phylogeny represents the evolution of the extant species and only contains information that can be observed directly (the extant species) or \emph{reconstructed} by statistical analysis of the DNA sequences of extant species (the topology of the tree and the times of the speciation events).

\begin{figure}
\centering
\tikzstyle{observed}=[ultra thick]
\tikzstyle{unobserved}=[]
\begin{tikzpicture}[scale=1.4]
\node at (-3,3.25) {(a) Complete tree};
\draw[observed] (-3, 1.75) -- (-3, 1.7);
\draw[observed] (-3.5, 1.7) -- (-2.5, 1.7);
\draw[observed] (-3.5, 1.7) -- (-3.5, 1.1);
\draw[observed] (-3.75, 1.1) -- (-3.25, 1.1);
\draw[observed] (-3.75, 1.1) -- (-3.75, 0);
\draw[observed] (-3.25, 1.1) -- (-3.25, 0);
\draw[observed] (-2.5, 1.7) -- (-2.5, 0.6);
\draw[observed] (-2.75, 0.6) -- (-2.25, 0.6);
\draw[observed] (-2.75, 0.6) -- (-2.75, 0);
\draw[observed] (-2.25, 0.6) -- (-2.25, 0);
\draw[unobserved] (-2.5, 1.4) -- (-1.5, 1.4) -- (-1.5, 1);
\draw[unobserved] (-1.75, 1) -- (-1.25, 1);
\draw[unobserved] (-1.75, 1) -- (-1.75, 0.7);
\draw[unobserved] (-2, 0.7) -- (-1.5, 0.7);
\draw[unobserved] (-2, 0.7) -- (-2, 0.4);
\draw[unobserved] (-1.5, 0.7) -- (-1.5, 0.2);
\draw[unobserved] (-1.25, 1) -- (-1.25, 0.6);
\draw[unobserved] (-3.75, 0.8) -- (-4.25, 0.8) -- (-4.25, 0.5);
\draw[unobserved] (-4.5, 0.5) -- (-4, 0.5);
\draw[unobserved] (-4.5, 0.5) -- (-4.5, 0.3);
\draw[unobserved] (-4, 0.5) -- (-4, 0.2);

\draw[->] (-0.4,1.2) -- (1.4, 1.2) node[midway,above]{Pruning};
\draw[->] (1.4,1.0) -- (-0.4, 1.0) node[midway,below]{Augmentation};

\node at (3,3.25) {(b) Reconstructed tree};
\draw[observed] (3, 1.75) -- (3, 1.7);
\draw[observed] (2.5, 1.7) -- (3.5, 1.7);
\draw[observed] (2.5, 1.7) -- (2.5, 1.1);
\draw[observed] (2.25, 1.1) -- (2.75, 1.1);
\draw[observed] (2.25, 1.1) -- (2.25, 0);
\draw[observed] (2.75, 1.1) -- (2.75, 0.3);
\draw[observed] (2.75, 0.3) -- (2.75, 0);
\draw[observed] (3.5, 1.7) -- (3.5, 1.45);
\draw[observed] (3.5, 1.45) -- (3.5, 0.6);
\draw[observed] (3.25, 0.6) -- (3.75, 0.6);
\draw[observed] (3.25, 0.6) -- (3.25, 0);
\draw[observed] (3.75, 0.6) -- (3.75, 0);

\draw[<-] (-5,2.8) -- (-5,-0.1) node[pos=0.5,xshift=-30,rotate=90]{\small Time $\tau$} ;
\draw[dotted] (-4.9, 0) -- (5, 0);
\draw (-4.95,0) -- (-5.05,0) node[xshift=-10,rotate=90]{$0$};
\draw (-4.95,1.7) -- (-5.05,1.7) node[xshift=-10,rotate=90]{$\tau_\text{MRCA}$};


\node (se) at (-3, 2.8) [anchor=north,align=center] {\small Root\\ \small (MRCA)};
\draw[->,densely dotted] (se) to[out=-90,in=90] (-3, 1.8);

\node (se) at (-1.7, 2.8) [anchor=north,align=center] {\small Unobserved\\ \small(hidden)\\ \small speciation};
\draw[->,densely dotted] (se) to[out=-120,in=30] (-2.45, 1.45);

\node (use) at (-4.2, 2.8) [anchor=north,align=center] {\small Observed\\ \small speciation};
\draw[->,densely dotted] (use) to[out=-80,in=140] (-3.55, 1.15);

\node (ee) at (-4.5, -0.5) [anchor=north] {\small Extinction};
\draw[->,densely dotted] (ee) to[out=80,in=-120] (-4.05, 0.15);

\node (ex) at (-3, -0.5) [anchor=north] {\small Extant species};
\draw[->,densely dotted] (ex) to[out=130,in=-90] (-3.75, -0.05);
\draw[->,densely dotted] (ex) to[out=110,in=-90] (-3.25, -0.05);
\draw[->,densely dotted] (ex) to[out=70,in=-90] (-2.75, -0.05);
\draw[->,densely dotted] (ex) to[out=50,in=-90] (-2.25, -0.05);

\draw[densely dotted] (-1.625,0.73) ellipse (0.5 and 0.8);
\node (us) at (-0.6, -0.5) [anchor=north,align=center] {\small Hidden (unobserved) subtree};
\draw[->,densely dotted] (us) to[out=110,in=-70] (-1.5, -0.1);

\end{tikzpicture}
\caption{Complete (on the left) vs. reconstructed (on the right) tree. The reconstructed tree shows only the evolution of the extant species.}
\label{fig:complete_reconstructed}
\end{figure}
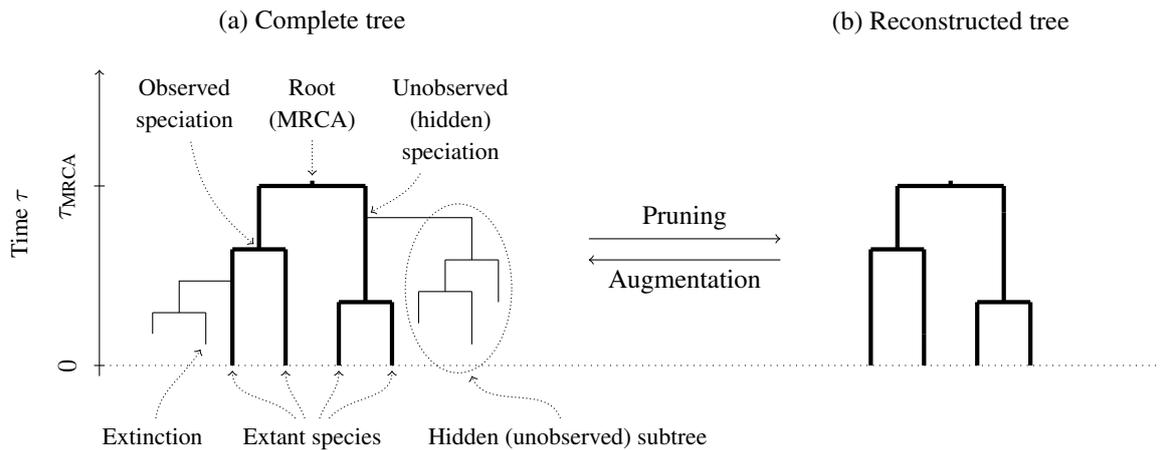

\subsection{Probabilistic programming}
\label{subsec:probprog}

The development of new probabilistic models and inference algorithms is a time-consuming and possibly error-prone process that usually requires skilled experts in probability, statistics and computer science. Probabilistic programming is a relatively new approach to solve this problem: generative models are expressed as computer programs in probabilistic programming languages (PPLs) with support for random variables and operations on them. Integral to PPLs are general inference engines that perform the inference in such programs. These engines estimate the distribution of all latent random variables conditioned on the observed data and use it to answer the queries of interest.

PPLs allow us to define and initialize random variables with a given distribution, for example:
\begin{quote}
\begin{algorithmic}
\State $x \sim \operatorname{Normal}(0, 1)$
\end{algorithmic}
\end{quote}\vspace{-0.5pc}
The program may use random variables as though any ordinary variable, and control the flow of the execution as shown in the following example:\begin{quote}
\begin{algorithmic}
\If {$x > 0.5$}
    \State $y \sim \operatorname{Normal}(x, 1)$
\Else
    \State $y \sim \operatorname{Exponential}(1)$
\EndIf
\end{algorithmic}
\end{quote}\vspace{-0.5pc}
Depending on the PPL, conditioning on the observed data might be specified explicitly or implicitly. The former means that conditioning on the observed data is a part of the program, for example:
\begin{quote}
\begin{algorithmic}
\State $x \sim \operatorname{Normal}(0, 1)$
\State $\mathbf{observe}\ 0.892 \sim \operatorname{Normal}(x, 1)$
\end{algorithmic}
\end{quote}\vspace{-0.5pc}
The latter implies that observed values of random variables are not part of the program, but instead specified at run time (e.g. as arguments).

The main general inference methods used in PPLs are adaptations of various inference algorithms for the universal setting described below, including Markov chain Monte Carlo (MCMC) \citep{metropolis1953equation, hastings1970monte}, sequential Monte Carlo (SMC) \citep{Gordon:1993,del2006sequential}, and Hamiltonian Monte Carlo (HMC) \citep{neal2011mcmc}.

There already exist quite a few PPLs today based on different programming paradigms, for example functional PPLs like Anglican \citep{tolpin2016design} and Venture \citep{mansinghka2014venture}; imperative Probabilistic C \citep{paige2014compilation}, Turing \citep{Ge2018}, Stan \citep{carpenter2017stan}, Edward \citep{Tran2016} and Pyro \citep{JMLR:v20:18-403}; and object-oriented Birch \citep{murray2018automated}.

\subsection{Programmatic model and SMC}

The execution of a probabilistic program can be modeled using a \emph{programmatic model} \citep{murray2018automated}.
Let $\{V_i\}_i$ denote a countable set of all random variables, both latent and observed, in a probabilistic program. This set might be infinite due to loops and recursion. During execution, whenever a random variable $V_i$ is encountered, its realization $v_i$ is drawn from a distribution associated with it.

Multiple executions of the program might in general encounter different subsets of the random variables (e.g., due to using random variables in conditional expressions) and encounter them in a different order (with the exception of the first one). For each random variable $V_j$ not encountered during the execution we set $v_j = \bot$ (the symbol $\bot$ represents an \emph{undefined} value). We will however assume that any execution encounters all \emph{observed} random variables and that these observations are encountered in the same order.

Let $\sigma$ denote a sequence of indices into $\{V_i\}$ specifying the order in which the random variables are encountered during an execution of the program, and let $|\sigma|$ denote the length of this sequence. Also, let $\{v_i\}_{i\in\sigma}$ denote the realizations of the random variables indexed by $\sigma$, i.e., $v_{\sigma[1]}, \dots, v_{\sigma[|\sigma|]}$. In a similar manner, we will use $\{V_i=v_i\}_{i\in\sigma}$ to denote $V_{\sigma[1]} = v_{\sigma[1]}, \dots, V_{\sigma[|\sigma|]} = v_{\sigma[|\sigma|]}$.

The index of the $k$-th encountered variable is given by a deterministic function $\operatorname{Ne}$ (for \textit{next}) of the realizations of the previously encountered random variables, so that
\begin{linenomath*}
\begin{equation*}
\sigma[k] = \operatorname{Ne}(\{v_i\}_{i\in\sigma[1:k-1]}),
\end{equation*}
\end{linenomath*}
where $\sigma[1{:}k{-}1]$ denotes the indices of the first $k-1$ encountered random variables. The function $\operatorname{Ne}$ is uniquely defined by the probabilistic program. If there are no more random variables to encounter, $\operatorname{Ne}$ returns $\bot$.

The $k$-th encountered random variable, $V_{\sigma[k]}$, is sampled from
\begin{linenomath*}
\begin{equation*}
V_{\sigma[k]} \sim p_{\sigma[k]}(\cdot|\operatorname{Pa}(\{v_i\}_{i\in\sigma[1:k-1]})),
\end{equation*}
\end{linenomath*} 
where $p_{\sigma[k]}$ is the distribution specified by the program, $\operatorname{Pa}$ (for \textit{parents}) is a deterministic function returning the parameters of this distribution, and again, it is a function of the realizations of the previously encountered random variables.

The joint distribution function encoded by the program can be given recursively (starting with $\sigma=[]$):
\begin{linenomath*}
\begin{equation*}
p(\{v_i\}_{i\not\in\sigma}|\sigma, \{V_j=v_j\}_{j\in\sigma}) = \left\{\begin{array}{l@{}r}
p(\{v_i\}_{i\not\in\sigma'}|\sigma', \{V_j=v_j\}_{j\in\sigma'}) \times p_\kappa(v_\kappa|\operatorname{Pa}(\{v_j\}_{j\in\sigma})) & \text{if\ } \kappa\ne\bot,\\[0.5pc]
1 & \hspace{-5pc}\text{if\ }\kappa=\bot \wedge \forall i\not\in\sigma: v_i=\bot,\\[0.5pc]
0 & \text{otherwise,}
\end{array}\right.
\end{equation*}
\end{linenomath*} 
where $\kappa = \operatorname{Ne}(\{v_i\}_{i\in\sigma})$ and $\sigma'$ is obtained from $\sigma$ by appending $\kappa$. The first case is the conditional probability chain rule, the remaining cases cover the situation where there are no more random variables to encounter.

We wish to sample from the posterior distribution $p(\{v_i\}_{i\not\in\gamma}|V_{\gamma[1]} = y_1, \dots, V_{\gamma[T]} = y_T)$, where $T$ denotes the number of observations, $y_t$ denotes the $t$-th observation and $\gamma$ denotes the sequence of indices of the observed random variables in $\{V_i\}$. The sequential nature of the joint distribution allows us to employ Sequential Monte Carlo methods \citep{del2006sequential} to sample from this posterior distribution, including the \emph{bootstrap particle filter} (BPF) summarized in Algorithm \ref{alg:bpf}. For the sake of brevity we have assumed that the last observation is also the last encountered random variable. In the pseudocode, $\operatorname{Cat}()$ denotes the categorical distribution with the given event probabilities. Variables denoted by $v$ are associative arrays (also known as maps or dictionaries) used to store the realizations of random variables ($v[i]$ denotes the realization of $V_i$). The \textsc{Propagate} function runs the program until it encounters an observation. 

\begin{algorithm}[t]
\caption{Bootstrap particle filter (BPF).}
\label{alg:bpf}
\begin{algorithmic}
\For {$n=1\textbf{\ to\ }N$} \Comment{{\small Initialize}}
    \State $v^\n_0 \gets \varnothing$; $w^\n_0 \gets 1$
\EndFor
\For {$t=1\textbf{\ to\ }T$}
    \For {$n=1\textbf{\ to\ }N$}
        
        \State $a \sim \operatorname{Cat}(\{w^\m_{t-1}/\sum_{l=1}^N w^{(l)}_{t-1}\}_{m=1}^N)$ \Comment{{\small Resample}}
        \State $v^\n_t \gets \Call{Propagate}{v_{t-1}^{(a)}}$ \Comment{{\small Propagate}}
        \State $w_t^\n \gets p_{\gamma[t]}(y_t|\operatorname{Pa}(v_t^\n))$ \Comment{{\small Weigh}}
        \State $v^\n_t[\gamma[t]] \gets y_t$
    \EndFor
\EndFor
\vspace*{1pc}
\Function {Propagate}{$v$} \Comment{{\small Run until next observe}}
    \State $\kappa \gets \operatorname{Ne}(v)$
    \While {$\kappa\not\in\gamma$}  
        \State $v[\kappa] \sim p_\kappa(\cdot|\operatorname{Pa}(v))$
        \State $\kappa \gets \operatorname{Ne}(v)$
    \EndWhile
    \State \Return $v$
\EndFunction
\end{algorithmic}

\end{algorithm}

Samples from the joint distribution and the corresponding weights can be used to estimate the expected value of a test function $h$ of interest:
\begin{linenomath*}
\begin{equation*}
\widehat\E[h] = \frac{\sum_n w_T^\n h\left(v_T^\n\right)}{\sum_n w_T^\n},
\end{equation*}
\end{linenomath*}
as well as to estimate the marginal likelihood $p(y_{1:T})$:
\begin{linenomath*}
\begin{equation*}
\widehat Z = \prod_{t=1}^T \frac{1}{N} \sum_{n=1}^N w_t^\n.
\end{equation*}
\end{linenomath*}

\section{Methods}
\label{sec:methods}

\subsection{Extended alive particle filter}

In the bootstrap particle filter, each particle is propagated by simulating the prior, and may make random choices that lead to a state with zero weight. In phylogenetic birth-death models this happens quite often: when simulating the evolution of subtrees that must ultimately become extinct, if any species happen to survive to the present time, the particle is assigned zero weight. In extreme cases, all particles have zero weight, and the BPF degenerates.

\cite{moral2015alive} considered this problem in a setting with indicator potentials (such as in approximate Bayesian computation), i.e. all weights being either zero or one. They proposed a modification of the BPF, where the resampling and propagation steps are repeated for particles that have weight zero until all particles have weight one. Details of the resulting alive particle filter (APF) as well as proofs of some of its theoretical properties can be found in \citet{moral2015alive}.

Although the original APF was designed specifically for indicator potentials, we have \textit{extended the algorithm to work with importance weights}, see Algorithm \ref{alg:apf} (the \textsc{Propagate} function is the same as in Algorithm \ref{alg:bpf}). The APF requires $N+1$ particles rather than $N$ in order to estimate the marginal likelihood without bias.

\begin{algorithm}[t]
\caption{Alive particle filter (APF).}
\label{alg:apf}
\begin{algorithmic}
\For {$n=1\textbf{\ to\ }N$} \Comment{{\small Initialize}}
    \State $v^\n_0 \gets \varnothing$; $w^\n_0 \gets 1$
\EndFor
\For {$t=1\textbf{\ to\ }T$}
    \State {\color{jkblue}$P_t \gets 0$}
    \For {$n=1\textbf{\ to\ }\color{jkblue}N+1$}
        \color{jkblue}\Repeat\color{black} \Comment{{\small Resample}}
            \State $a \sim \operatorname{Cat}(\{w^\m_{t-1}/\sum_{l=1}^N w^{(l)}_{t-1}\}_{m=1}^N)$  
            \State $v^\n_t \gets \Call{Propagate}{v_{t-1}^{(a)}}$ \Comment{{\small Propagate}}
            \State {\color{jkblue}$P_t \gets P_t + 1$}
            \State $w_t^\n \gets p_{\gamma[t]}(y_t|\operatorname{Pa}(v_t^\n))$ \Comment{{\small Weigh}}
        \color{jkblue}\Until{$w_t^\n > 0$}\color{black}
        \State $v^\n_t[\gamma[t]] \gets y_t$
    \EndFor
\EndFor
\end{algorithmic}
\end{algorithm}

At the $t$-th observe statement, if the weight of a particle is zero, the resampling and propagation steps are repeated. This procedure is repeated until the weights of all $N+1$ particles are positive. The APF counts the total number of propagations $P_t$ for each observation. The algorithm never uses the states or weights of the $N+1$-th particle, but propagations made using this particle are included in $P_t$, and used to calculate the unbiased estimate of the marginal likelihood $p(y_{1:T})$:
\begin{linenomath*}
\begin{equation*}
\widehat Z = \prod_{t=1}^T \frac{\sum_{n=1}^N w_t^\n}{P_t - 1}. 
\end{equation*}
\end{linenomath*}
The proof of unbiasedness can be found in Appendix \ref{sec:proof} in the supplementary material.

Unbiasedness of the marginal likelihood estimate opens for the possibility to use the APF within particle Markov chain Monte Carlo methods.

\subsection{Birth-death models as probabilistic programs}

Phylogentic birth-death models constitute a family of models where speciation (birth) events and extinction (death) events occur along the branches of a phylogenetic tree. Typically, the waiting times between events are exponentially distributed. In general, the rates of these exponential distributions do not remain constant but rather change continuously, discontinuously, or both. Some models assume that these rates further depend on a state variable that itself evolves discontinuously along the tree; in some cases the value of this state variable is given for the extant species. 

The \emph{constant-rate birth death (CRBD)} model \citep{kendall1948generalized} is the simplest birth-death model, where the speciation rate $\lambda$ and extinction rate $\mu$ remain constant over time. Pseudocode for generating phylogenetic trees using the CRBD model can be found in Appendix \ref{sec:gen_crbd} in the supplementary material.

Phylogenetic trees are unordered (i.e.~there is no specific ordering of the children of each internal node) and usually include the labels for the extant species.
To derive the likelihood of a complete labeled phylogenetic tree $\mathcal{T}$, let us first assume that the tree is ordered and unlabeled.
Let $\mathcal{T}_r$ denote the subtree rooted at the node $r$, and $\operatorname{Ch}(r)$ denote the children of this node. The likelihood of the subtree $\mathcal{T}_r$ can be expressed recursively (we have dropped conditioning on $\lambda$ and $\mu$ in the notation for brevity):
\begin{linenomath*}
\begin{equation*}
p(\mathcal{T}_r) = \left\{ \begin{array}{@{\hspace{0.1pc}}l@{}r}
\prod\limits_{c\in \operatorname{Ch}(r)}\hspace{-0.5pc}p(\mathcal{T}_c) & \text{if $r$ is the root node},\\[1pc]
\lambda e^{-(\lambda+\mu) \Delta_r} \hspace{-0.5pc}\prod\limits_{c\in \operatorname{Ch}(r)}\hspace{-0.5pc}p(\mathcal{T}_c) & \text{if $r$ is a speciation},\\[1pc]
\mu e^{-(\lambda+\mu) \Delta_r} & \text{if $r$ is an extinction},\\[1pc]
e^{-(\lambda+\mu) \Delta_r} & \text{if $r$ is an extant species},
\end{array}\right.
\end{equation*}
\end{linenomath*}
where $\Delta_r$ is the length of the branch between the node $r$ and its parent. 
If $r$ is a speciation event, no extinction occurs along the branch (factor $e^{-\mu \Delta_r}$) and the speciation happens after a waiting time $\Delta_r$ (factor $\lambda \exp^{-\lambda \Delta_r}$). If $r$ is an extinction event, no speciation occurs along the branch (factor $e^{-\lambda \Delta_r}$) and the extinction occurs after waiting time $\Delta_r$ (factor $\mu e^{-\mu \Delta_r}$). Finally, if $r$ is an extant species, neither extinction nor speciation occurs along the branch.

The likelihood of the complete, unordered and labeled phylogeny $\mathcal{T}$ is given by
\begin{linenomath*}
\begin{equation*}
p(\mathcal T) = \frac{2^{S+1}}{C!} p(\mathcal{T}_\text{root}),
\end{equation*}
\end{linenomath*}
where $S$ is the number of speciation events (excluding the root) and $C$ is the number of extant species. The factor $2^{S+1}$ represents the number of all possible orderings of the tree and there are $C!$ permutations of the labels of the extant species.
The likelihood can  be conveniently written as
\begin{linenomath*}
\begin{equation*}
p(\mathcal{T}) = \frac{2^{S+1}}{C!} \lambda^S \mu^X e^{-(\lambda+\mu)L},
\end{equation*}
\end{linenomath*}
where we have introduced $L$ to denote the sum of all branch lengths and $X$ to denote the number of extinction events.
The likelihood in other birth-death models that admit varying rates and/or include the state can be derived in a similar way.

The task of interest is to infer model parameters given a reconstructed tree. Recall that this tree is a part of the complete tree corresponding to the extant species and their ancestors. A naive approach to inference is to simulate trees from the generative model, prune back the extinct subtrees, and reject those for which the pruned tree does not equal the observed tree. Such an approach always results in rejection.

Instead, we turn the problem upside down: starting with the observed tree and \emph{augmenting} it with unobserved information to obtain a complete tree. Recalling Figure~\ref{fig:complete_reconstructed}, the observed tree is traversed in depth-first order. Along each branch, the generative model is used to simulate:
\begin{itemize}
    \item changes to the state (in models with state),
    \item changes to the speciation and extinction rates,
    \item hidden speciation events.
\end{itemize}
For each of the hidden speciation events, the model simulates the evolution of the new species (i.e.~a hidden subtree). If any portion of a hidden subtree survives to the present time, the weight is set to zero. If not, the current weight is doubled, since there are two possible orderings of the children created by a hidden speciation event on an observed branch.

If the examined branch ends with a speciation event, the algorithm observes $0 \sim \operatorname{Exponential}(\lambda)$. Finally, as there were no extinction events along the processed branch, the algorithm observes $0 \sim \operatorname{Poisson}(\mu \Delta)$. In models with state, if the branch ends at $\tau=0$ (i.e. the present time) we also condition on the simulated state being equal to the observed state. We trigger resampling at the end of each branch.

Let us return to the CRBD model in light of the discussion above. The likelihood of a proposed complete tree $\mathcal{T}'$ that is compatible with the observation (i.e. without any extant species in the hidden subtrees) is given by
\begin{linenomath*}
\begin{equation*}
q(\mathcal{T}') = \lambda^{H'} e^{-\lambda L_\text{obs}} \times 2^{S'} \lambda^{S'} \mu^{X'} e^{-(\lambda+\mu) L'},
\end{equation*}
\end{linenomath*}
where $H'$ denotes the number of all simulated hidden speciation events along the observed tree, $L_\text{obs}$ the sum of the branch lengths in the observed tree, $S'$ the number of speciation events in all hidden subtrees, $X'$ the number of extinction events and finally $L'$ denotes the sum of the branch lengths in the hidden subtrees. The factor $\lambda^{H'} e^{-\lambda L_\text{obs}}$ is related to the hidden speciation events, and the rest is the combined likelihood of all hidden subtrees.

The weight of the proposal $\mathcal{T}'$ is given by
\begin{linenomath*}
\begin{equation*}
w(\mathcal{T}') = \frac{2^{S_\text{obs}+1}}{C!} \times 2^{H'} \times \lambda^{S_\text{obs}} \times e^{-\mu L_\text{obs}},
\end{equation*}
\end{linenomath*}
where $S_\text{obs}$ is the number of observed speciation events. The factor $2^{S_\text{obs}+1}/C!$, related to the number of possible orderings and the number of labeling permutations, is used as the initial weight of each proposal. The factor $2^{H'}$ corresponds to doubling the weight for each hidden subtree, the factor $\lambda^{S_\text{obs}}$ is due to observing $0 \sim \operatorname{Exponential}(\lambda)$ at all observed speciations, and finally the factor $e^{-\mu L_\text{obs}}$ is due to observing $0 \sim \operatorname{Poisson}(\mu \Delta)$ for all observed branches. 

Multiplying the likelihood $q(\mathcal{T'})$ and the weight $w(\mathcal{T'})$ of the proposal and summing the event numbers and the branch lengths we get
\begin{linenomath*}
\begin{equation*}
q(\mathcal{T'}) w(\mathcal{T}') = p(\mathcal{T'}).
\end{equation*}
\end{linenomath*}

The factor $2^{S_\text{obs}+1}/C!$ in $w(\mathcal{T}')$ is constant for all proposals and we will omit it from the weight in the algorithms and experiments.

\subsection{Delayed sampling of the rates}

In a Bayesian setting, the parameters are associated with a prior distribution. Using the gamma distribution (or the exponential distribution as its special case) as a prior for the rates of speciation, extinction and state change is mathematically convenient since the gamma distribution is a conjugate prior for both the Poisson and the exponential likelihood.
Instead of sampling these parameters from the prior distribution before running the particle filter (which we refer to as \emph{immediate sampling}), we can exploit the conjugacy, which allows us to marginalize out the parameters and sample them after running the particle filter. Exploiting the conjugacy in a probabilistic program can be automated by an algorithm known as \emph{delayed sampling}~\citep{murray2018delayed}.

Consider the following prior:
\begin{linenomath*}
\begin{equation*}
\nu \sim \operatorname{Gamma}(k, \theta),
\end{equation*}
\end{linenomath*}
with $k \in \mathbb{N}$. When the program needs to make a draw from a Poisson distribution 
\begin{linenomath*}
\begin{equation*}
n \sim \operatorname{Poisson}(\nu \Delta),
\end{equation*}
\end{linenomath*}
it can instead make a draw from the marginalized distribution:
\begin{linenomath*}
\begin{equation*}
n \sim \operatorname {NegativeBinomial}\left(k, \frac{1}{1 + \Delta \theta}\right),
\end{equation*}
\end{linenomath*}
where $\operatorname{NegativeBinomial}(k, p)$ is the negative binomial distribution counting the number of failures given the number of successes $k$ and the probability of success $p$ in each trial. The distribution for $\nu$ is then updated to the posterior distribution according to
\begin{linenomath*}
\begin{equation*}
\nu \sim \operatorname{Gamma}\left(k + n, \frac{\theta}{1 + \Delta \theta}\right).
\end{equation*}
\end{linenomath*}

Similarly for variables distributed according to the exponential distribution, instead of drawing
\begin{linenomath*}
\begin{equation*}
\Delta \sim \operatorname{Exponential}(\nu),
\end{equation*}
\end{linenomath*}
the program makes a draw from the marginalized distribution:
\begin{linenomath*}
\begin{equation*}
\Delta \sim \operatorname{Lomax}\left(\frac{1}{\theta}, k\right),
\end{equation*}
\end{linenomath*}
where the first parameter denotes the scale and the second the shape of the Lomax distribution, and the distribution for $\nu$ is then updated to
\begin{linenomath*}
\begin{equation*}
\nu \sim \operatorname{Gamma}\left(k + 1, \frac{\theta}{1 + \Delta \theta}\right).
\end{equation*}
\end{linenomath*}
Using this strategy there is actually \textit{no need to sample the rates at all}; all draws involving these rates are replaced by draws from the negative binomial and the Lomax distributions with a consequent update of the rate distribution. Details of these conjugacy relationships can be found in Appendix \ref{sec:distributions} in the supplementary material.

\section{Experiments}
\label{sec:experiments}

We implemented the inference algorithms described above in the probabilistic programming language Birch~\citep{murray2018automated} and added support for the conjugacy relationships described in the previous section, so that Birch can provide automated delayed sampling for these. We also implemented two phylogenetic birth-death models, described in Sections \ref{subsec:crbd} and \ref{subsec:bisse} below.

We ran the inference for these models using different combinations of the inference method, the sampling strategy (immediate or delayed) and different number of particles $N$. For each combination we executed the program $M$ times, collected the estimates $\{\widehat Z_m\}_{m=1}^M$ of the marginal likelihood and calculated the relative effective sample size (\RESS):
\begin{linenomath*}
\begin{equation*}
\RESS = \frac{1}{M} \frac{\left(\sum_{m=1}^M \widehat Z_m\right)^2}{\sum_{m=1}^M \widehat Z_m^2},
\end{equation*}
\end{linenomath*}
as well as the conditional acceptance ratio (CAR) (see \citealp{murray2013disturbance} for more detail):
\begin{linenomath*}
\begin{equation*}
\CAR = \frac{1}{M} \left(2 \sum_{i=1}^M c_i - 1 \right),
\end{equation*}
\end{linenomath*}
where $c_i$ is the sum of the $i$ smallest elements in $\{\widehat Z_m\}_m$. We also calculated the sample variance $\var\log\widehat Z$.

For the experiments with the APF we also compared the total number of propagations with the number of propagations in the BPF by calculating
\begin{linenomath*}
\begin{equation*}
\rho = \frac{\sum_{m=1}^M P_m}{MNT},
\end{equation*}
\end{linenomath*}
where $P_m$ is the number of all propagations made during the $m$-th execution 
and $T$ is the number of branches in the observation. Note that $NT$ is the number of propagations in the BPF.

\subsection{Constant-rate birth death model}
\label{subsec:crbd}

\begin{algorithm}[h!]
\caption{CRBD model as a probabilistic program.}
\label{alg:crbd}
\begin{algorithmic}
\State \textbf{Input:} \begin{itemize} 
    \item $\mathcal{T}$ -- a pre-ordered list of nodes in the observation
    \item $k_\lambda, \theta_\lambda$ -- the shape and scale of the prior Gamma distribution for $\lambda$ ($k_\lambda \in \mathbb{N}$)
    \item $k_\mu, \theta_\mu$ -- the shape and scale of the prior Gamma distribution for $\mu$ ($k_\mu \in \mathbb{N}$)
\end{itemize}
\vspace*{1pc}
\State $\lambda \sim \operatorname{Gamma}(k_\lambda, \theta_\lambda)$
\State $\mu \sim \operatorname{Gamma}(k_\mu, \theta_\mu)$
\ForAll {$r \in \mathcal{T}$}
    \If {$r$ is the root}
        \State \textbf{continue}
    \EndIf
    \State $c_\text{hs} \sim \operatorname{Poisson}(\lambda \Delta_r)$
    \For {$i \gets 1\ \textbf{to}\ c_\text{hs}$}
        \State $\tau \sim \operatorname{Uniform}(\tau_r, \tau_r + \Delta_r)$
        \If {\Call{BranchSurvives}{$\tau$}}
            \State set the weight to 0 and \textbf{return}
        \EndIf
        \State double the weight
    \EndFor
    \If {$r$ has children}
        \State $\textbf{observe}\ 0 \sim \operatorname{Exponential}(\lambda)$
    \EndIf
    \State $\textbf{observe}\ 0 \sim \operatorname{Poisson}(\mu \Delta_r)$
\EndFor
\vspace*{1pc}
\Function{BranchSurvives}{$\tau, \lambda, \mu$}
\State $\Delta \sim \operatorname{Exponential}(\mu)$
\If {$\Delta \geq \tau$}
    \State \Return{true}
\EndIf
\State $c_b \sim \operatorname{Poisson}\left(\lambda \Delta\right)$
\For {$i \gets 1\ \textbf{to}\ c_b$}
    \State $\tau' \sim \operatorname{Uniform}(\tau - \Delta, \tau)$
    \If {\Call{BranchSurvives}{$\tau', \lambda, \mu$}}
        \State \Return true
    \EndIf
\EndFor
\State \Return{false}
\EndFunction
\end{algorithmic}
\end{algorithm}

Pseudocode for the probabilistic program implementing the \emph{constant-rate birth death (CRBD)} model is listed as Algorithm \ref{alg:crbd}. To sample speciation events along a branch we first sample a number of events from a Poisson distribution and then sample the time of each event from a uniform distribution. The implementation in Birch can be found in the supplementary material.

We used the phylogeny of cetaceans \citep{steeman2009radiation} as the observation. This phylogeny (Figure~\ref{fig:cetaceans} in the supplementary material) represents the evolution of whales, dolphins and porpoises and contains 87 extant species. We used $\operatorname{Gamma}(1, 1)$ as the prior for both the speciation and extinction rate. The results of experiments comparing BPF and APF with immediate or delayed sampling for different number of particles $N$, and running $M=200$ executions for each combination, are summarized in Table~\ref{tab:res_crbd} and Figure~\ref{fig:logz_crbd}.

\begin{figure}
    \setlength\figureheight{3cm}
    \setlength\figurewidth{7cm}
    \centering
    \input{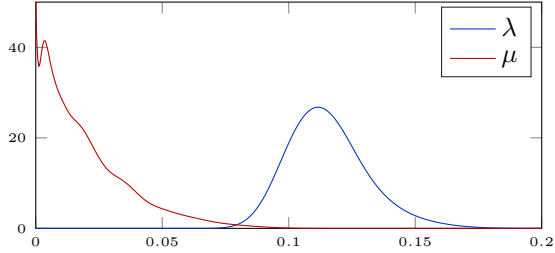}
    \caption{The posterior distributions for the speciation and extinction rates for the cetaceans using the CRBD model.}
    \label{fig:crbd_posterior}
\end{figure}

\begin{figure}[p]
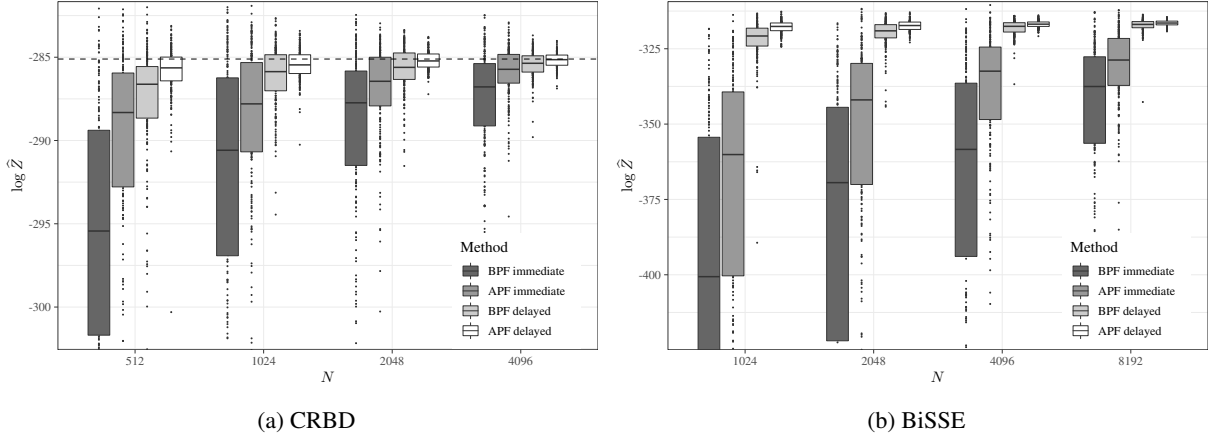

\begin{subfigure}{0.5\textwidth}
\centering
\resizebox{\textwidth}{!}{\input figs/logz_crbd}
\caption{CRBD}
\label{fig:logz_crbd}
\end{subfigure}
\begin{subfigure}{0.5\textwidth}
\centering
\resizebox{\textwidth}{!}{\input figs/logz_bisse}
\caption{BiSSE}
\label{fig:logz_bisse}
\end{subfigure}
\caption{Box plot of $\log \widehat Z$ for the CRBD (on the left) and BiSSE (on the right) model for different number of particles ($N$) and methods.  The lower and upper hinges correspond to the first and third quartile, whereas the midhinge corresponds to the median. Values outside of the interquartile range are shown as dots. The horizontal dashed line shows the true value of $\log Z$ for the CRBD model.}
\end{figure}

When using delayed sampling, the speciation and extinction rates are never sampled; the rates are instead represented by gamma distributions with parameters that are updated during the execution. Let $k_\lambda^\m, \theta_\lambda^\m, k_\mu^\m$ and $\theta_\mu^\m$ denote the final values of these parameters for a particle drawn from all particles in the $m$-th run with the probabilities proportional to their final weights. The posterior distributions for $\lambda$ and $\mu$ can be estimated by mixtures of gamma distributions:
\begin{linenomath*}
\begin{align*}
\lambda &\sim \frac{1}{\sum_{m=1}^M \widehat Z_m} \sum_{m=1}^M \widehat Z_m \operatorname{Gamma}\left(k_\lambda^\m, \theta_\lambda^\m\right),\\
\mu &\sim \frac{1}{\sum_{m=1}^M \widehat Z_m} \sum_{m=1}^M \widehat Z_m \operatorname{Gamma}\left(k_\mu^\m, \theta_\mu^\m\right).
\end{align*}
\end{linenomath*}

Figure~\ref{fig:crbd_posterior} depicts the posterior distributions for the speciation and extinction rates estimated using $M=1000$ runs of the APF with $N=4096$ particles.

\subsection{Binary-state speciation and extinction model}
\label{subsec:bisse}

The \emph{binary-state speciation and extinction (BiSSE)} model \citep{maddison2007estimating} introduces a binary state for species, denoted by $s \in \{0, 1\}$. Each state has its own (but constant) speciation and extinction rates, denoted by $\lambda_s$ and $\mu_s$. The waiting time between switching state is exponentially distributed with rates $q_{01}$ for switching from state 0 to state 1, and $q_{10}$ from state 1 to state 0. In our experiments we made a (common) assumption that $q_{01}=q_{10}=\varsigma$.

We used the same observation as \citet{rabosky2015model}, i.e., we extended the cetacean phylogeny with the state variable related to the body length of cetaceans obtained from \citet{slater2010diversity}. Data are available for 74 of the 87 extant species. The binary state 0 and 1 refers to the length being below and above the median. Again we used $\operatorname{Gamma}(1, 1)$ as the prior for $\lambda_0$, $\lambda_1$, $\mu_0$ and $\mu_1$, and $\operatorname{Gamma}(1, 10/820.28)$ as the prior for $\varsigma$ (the number in the denominator is the sum of all branch lengths). The initial state at the root is drawn from $\{0, 1\}$ with equal probabilities. The results for experiments comparing the BPF and the APF with immediate or delayed sampling for different number of particles $N$, and running $M=200$ executions for each combination, are summarized in Table~\ref{tab:res_bisse} and Figure~\ref{fig:logz_bisse}. When running the experiments using the BPF and immediate sampling, a certain fraction of the executions degenerated---from 25\% of the executions with 1024 particles down to 1.5\% of the executions with 4096 particles. These executions were excluded when calculating $\var\log\widehat Z$.

Our implementation of the BiSSE model can be found in the supplementary material.

\begin{table}[p]
\centering
\caption{Summary of the results of the experiments with the CRBD model using the cetacean phylogeny as the observation, priors $\lambda, \mu \sim \operatorname{Gamma}(1, 1)$, and $M=200$.}
\label{tab:res_crbd}
\begin{tabular}{@{}r|rrr|rrr|rrrr|rrrr@{}}
\multicolumn{1}{c}{} & \multicolumn{6}{c}{\bf Bootstrap particle filter (BPF)} & \multicolumn{8}{c}{\bf Alive particle filter (APF)} \\[0.5pc]
& \multicolumn{3}{c|}{Immediate sampling} & \multicolumn{3}{c|}{Delayed sampling} & \multicolumn{4}{c|}{Immediate sampling} & \multicolumn{4}{c}{Delayed sampling} \\[0.25pc]
$N$ & $\RESS$ & \hspace{-0.5pc}$\CAR$ & $\var$ & $\RESS$ & \hspace{-0.5pc}$\CAR$ & $\var$ & $\rho$ & \hspace*{-0.5pc}$\RESS$ & \hspace{-0.5pc}$\CAR$ & $\var$ & $\rho$ & \hspace*{-0.5pc}$\RESS$ & \hspace{-0.5pc}$\CAR$ & $\var$ \\
\hline
 512 &   0.02 &   0.04 &  334.2 &   0.13 &   0.23 &   31.6 &    1.8 &   0.11 &   0.15 &   50.7 &    1.7 &   0.40 &   0.46 &    2.7 \\
1024 &   0.11 &   0.12 &  117.5 &   0.28 &   0.35 &   12.9 &    1.8 &   0.14 &   0.18 &   20.2 &    1.7 &   0.54 &   0.55 &    0.8 \\
2048 &   0.14 &   0.17 &   52.2 &   0.47 &   0.47 &    8.7 &    1.7 &   0.29 &   0.32 &    7.6 &    1.7 &   0.73 &   0.69 &    0.3 \\
4096 &   0.18 &   0.23 &   17.2 &   0.67 &   0.63 &    0.7 &    1.7 &   0.36 &   0.42 &    2.7 &    1.7 &   0.84 &   0.76 &    0.2 \\
\end{tabular}
\end{table}

\begin{table}[p]
\centering
\caption{Summary of the results of the experiments with the BiSSE model using the cetacean phylogeny extended with information about the body length as the observation, priors $\lambda_0, \lambda_1, \mu_0, \mu_1 \sim \operatorname{Gamma}(1, 1)$ and $\varsigma \sim \operatorname{Gamma}(1, 10/820.28)$, and $M=200$.}
\label{tab:res_bisse}
\begin{tabular}{@{}r|rrr|rrr|rrrr|rrrr@{}}
\multicolumn{1}{c}{} & \multicolumn{6}{c}{\bf Bootstrap particle filter (BPF)} & \multicolumn{8}{c}{\bf Alive particle filter (APF)} \\[0.5pc]
& \multicolumn{3}{c|}{Immediate sampling} & \multicolumn{3}{c|}{Delayed sampling} & \multicolumn{4}{c|}{Immediate sampling} & \multicolumn{4}{c}{Delayed sampling} \\[0.25pc]
$N$ & $\RESS$ & \hspace{-.5pc}$\CAR$ & $\var$ & $\RESS$ & \hspace{-.5pc}$\CAR$ & $\var$ & $\rho$ & \hspace{-.5pc}$\RESS$ & \hspace{-.5pc}$\CAR$ & $\var$ & $\rho$ & \hspace{-.5pc}$\RESS$ & \hspace{-.5pc}$\CAR$ & $\var$ \\
\hline
1024 &   0.01 &   0.01 & 3382.2 &   0.06 &   0.09 &   72.4 &  10.0 &   0.01 &   0.01 & 2294.9 &    3.1 &   0.10 &   0.21 &    4.8 \\
2048 &   0.01 &   0.01 & 2954.0 &   0.09 &   0.15 &   22.2 &   6.6 &   0.02 &   0.02 & 1044.5 &    3.1 &   0.14 &   0.27 &    2.9 \\
4096 &   0.01 &   0.01 & 1894.1 &   0.22 &   0.27 &    7.6 &   5.9 &   0.01 &   0.01 &  614.3 &    3.1 &   0.34 &   0.43 &    1.3 \\
8192 &   0.02 &   0.02 &  968.4 &   0.28 &   0.35 &    6.1 &   3.9 &   0.02 &   0.03 &  160.9  &   3.0 &   0.54 &   0.55 &    0.8 \\
\end{tabular}
\end{table}

\section{Discussion and conclusion}
\label{sec:conclusion}

In this paper we introduced a new general inference method for probabilistic programming combining an extended alive particle filter (APF) with delayed sampling, and proved that the resulting estimate of the marginal likelihood is unbiased. We showed how phylogenetic birth-death models can be implemented in probabilistic programming languages, in particular, we considered two models---CRBD and BiSSE and their implementation in the probabilistic programming language Birch. We showed the strength of this inference method for these models compared to the standard bootstrap particle filter (BPF) (Tables~\ref{tab:res_crbd} and \ref{tab:res_bisse}, and Figures~\ref{fig:logz_crbd} and \ref{fig:logz_bisse}): for the BiSSE model using 8192 particles we increased $\RESS$ approximately 29 times, $\CAR$ approximately 30 times and lowered $\var\log\widehat Z$ more than  1150 times at the cost of running 3 times more propagations.

The extended APF is a suitable drop-in replacement for the BPF for black-box probabilistic programs. If a program produces only positive weights, the APF produces the same result as the BPF at the overhead of just one extra particle, used to estimate the marginal likelihood. On the other hand, if the program can produce zero weights, the APF behaves much more reasonably than the BPF: resampling and propagation are repeated until all particles have positive weight. This may seem equivalent to using the BPF with a higher number of particles ($\rho$ times more to be precise), but this is not the case: the number of propagations $P_t$ is not the same throughout the execution, but rather adapted dynamically for each $t$. This simplifies the tuning of the number of particles for such models.

The learning of rates in birth-death models sits in the context of broader problems in phylogenetics, such as the learning of trees. Interesting future work would be to consider whether models and methods for learning rates can be combined with models and methods for learning tree structures for joint inference.

\subsubsection*{Acknowledgements}

The authors wish to thank Johannes Borgstr\"om, Viktor Senderov and Andreas Lindholm for their useful feedback. This research was financially supported by the Swedish Foundation for Strategic Research (SSF) via the project ASSEMBLE and by the Swedish Research Council grants 2013-4853, 2014-05901 and 2017-03807.

\begingroup
\renewcommand{\section}[2]{}%
\subsubsection*{References}
\bibliography{kudlicka2019probabilistic}

\begin{thebibliography}{29}
\providecommand{\natexlab}[1]{#1}
\providecommand{\url}[1]{\texttt{#1}}
\expandafter\ifx\csname urlstyle\endcsname\relax
  \providecommand{\doi}[1]{doi: #1}\else
  \providecommand{\doi}{doi: \begingroup \urlstyle{rm}\Url}\fi

\bibitem[Bingham et~al.(2019)Bingham, Chen, Jankowiak, Obermeyer, Pradhan,
  Karaletsos, Singh, Szerlip, Horsfall, and Goodman]{JMLR:v20:18-403}
E.~Bingham, J.~P. Chen, M.~Jankowiak, F.~Obermeyer, N.~Pradhan, T.~Karaletsos,
  R.~Singh, P.~Szerlip, P.~Horsfall, and N.~D. Goodman.
\newblock Pyro: Deep universal probabilistic programming.
\newblock \emph{Journal of Machine Learning Research}, 20\penalty0
  (28):\penalty0 1--6, 2019.

\bibitem[Bouckaert et~al.(2014)Bouckaert, Heled, K{\"u}hnert, Vaughan, Wu, Xie,
  Suchard, Rambaut, and Drummond]{bouckaert2014beast}
R.~Bouckaert, J.~Heled, D.~K{\"u}hnert, T.~Vaughan, C.-H. Wu, D.~Xie, M.~A.
  Suchard, A.~Rambaut, and A.~J. Drummond.
\newblock {BEAST 2}: A software platform for {B}ayesian evolutionary analysis.
\newblock \emph{PLOS Computational Biology}, 10\penalty0 (4):\penalty0
  e1003537, 2014.

\bibitem[Carpenter et~al.(2017)Carpenter, Gelman, Hoffman, Lee, Goodrich,
  Betancourt, Brubaker, Guo, Li, and Riddell]{carpenter2017stan}
B.~Carpenter, A.~Gelman, M.~D. Hoffman, D.~Lee, B.~Goodrich, M.~Betancourt,
  M.~Brubaker, J.~Guo, P.~Li, and A.~Riddell.
\newblock Stan: A probabilistic programming language.
\newblock \emph{Journal of Statistical Software}, 76\penalty0 (1), 2017.

\bibitem[Del~Moral et~al.(2006)Del~Moral, Doucet, and Jasra]{del2006sequential}
P.~Del~Moral, A.~Doucet, and A.~Jasra.
\newblock Sequential {M}onte {C}arlo samplers.
\newblock \emph{Journal of the Royal Statistical Society: Series B (Statistical
  Methodology)}, 68\penalty0 (3):\penalty0 411--436, 2006.

\bibitem[Del~Moral et~al.(2015)Del~Moral, Jasra, Lee, Yau, and
  Zhang]{moral2015alive}
P.~Del~Moral, A.~Jasra, A.~Lee, C.~Yau, and X.~Zhang.
\newblock The alive particle filter and its use in particle {M}arkov chain
  {M}onte {C}arlo.
\newblock \emph{Stochastic Analysis and Applications}, 33\penalty0
  (6):\penalty0 943--974, 2015.

\bibitem[Drummond and Rambaut(2007)]{drummond2007beast}
A.~J. Drummond and A.~Rambaut.
\newblock {BEAST}: {B}ayesian evolutionary analysis by sampling trees.
\newblock \emph{BMC Evolutionary Biology}, 7\penalty0 (1):\penalty0 214, 2007.

\bibitem[Ge et~al.(2018)Ge, Xu, and Ghahramani]{Ge2018}
H.~Ge, K.~Xu, and Z.~Ghahramani.
\newblock Turing: A language for flexible probabilistic inference.
\newblock In \emph{Proceedings of the Twenty-First International Conference on
  Artificial Intelligence and Statistics}, volume~84, pages 1682--1690, 2018.

\bibitem[Goodman et~al.(2008)Goodman, Mansinghka, Roy, Bonawitz, and
  Tenenbaum]{goodman2008church}
N.~D. Goodman, V.~K. Mansinghka, D.~M. Roy, K.~Bonawitz, and J.~B. Tenenbaum.
\newblock Church: a language for generative models.
\newblock In \emph{Proceedings of the 24th Conference in Uncertainty in
  Artificial Intelligence}, pages 220--229, 2008.

\bibitem[Gordon et~al.(1993)Gordon, Salmond, and Smith]{Gordon:1993}
N.~J. Gordon, D.~J. Salmond, and A.~F.~M. Smith.
\newblock Novel approach to nonlinear/non-{G}aussian {B}ayesian state
  estimation.
\newblock In \emph{IEE Proceedings on Radar and Signal Processing}, volume 140,
  pages 107--113, 1993.

\bibitem[Hastings(1970)]{hastings1970monte}
W.~K. Hastings.
\newblock {M}onte {C}arlo sampling methods using {M}arkov chains and their
  applications.
\newblock \emph{Biometrika}, 57\penalty0 (1):\penalty0 97--109, 1970.

\bibitem[H{\"o}hna et~al.(2016)H{\"o}hna, Landis, Heath, Boussau, Lartillot,
  Moore, Huelsenbeck, and Ronquist]{hohna2016revbayes}
S.~H{\"o}hna, M.~J. Landis, T.~A. Heath, B.~Boussau, N.~Lartillot, B.~R. Moore,
  J.~P. Huelsenbeck, and F.~Ronquist.
\newblock Rev{B}ayes: {B}ayesian phylogenetic inference using graphical models
  and an interactive model-specification language.
\newblock \emph{Systematic Biology}, 65\penalty0 (4):\penalty0 726--736, 2016.

\bibitem[Huelsenbeck and Ronquist(2001)]{huelsenbeck2001mrbayes}
J.~P. Huelsenbeck and F.~Ronquist.
\newblock {MRBAYES}: {B}ayesian inference of phylogenetic trees.
\newblock \emph{Bioinformatics}, 17\penalty0 (8):\penalty0 754--755, 2001.

\bibitem[Kendall(1948)]{kendall1948generalized}
D.~G. Kendall.
\newblock On the generalized ``birth-and-death'' process.
\newblock \emph{The Annals of Mathematical Statistics}, 19\penalty0
  (1):\penalty0 1--15, 1948.

\bibitem[Koller and Friedman(2009)]{koller2009probabilistic}
D.~Koller and N.~Friedman.
\newblock \emph{Probabilistic graphical models: principles and techniques}.
\newblock MIT Press, 2009.

\bibitem[Maddison et~al.(2007)Maddison, Midford, and
  Otto]{maddison2007estimating}
W.~P. Maddison, P.~E. Midford, and S.~P. Otto.
\newblock Estimating a binary character's effect on speciation and extinction.
\newblock \emph{Systematic Biology}, 56\penalty0 (5):\penalty0 701--710, 2007.

\bibitem[Mansinghka et~al.(2014)Mansinghka, Selsam, and
  Perov]{mansinghka2014venture}
V.~Mansinghka, D.~Selsam, and Y.~Perov.
\newblock Venture: a higher-order probabilistic programming platform with
  programmable inference.
\newblock \emph{arXiv preprint arXiv:1404.0099}, 2014.

\bibitem[Metropolis et~al.(1953)Metropolis, Rosenbluth, Rosenbluth, Teller, and
  Teller]{metropolis1953equation}
N.~Metropolis, A.~W. Rosenbluth, M.~N. Rosenbluth, A.~H. Teller, and E.~Teller.
\newblock Equation of state calculations by fast computing machines.
\newblock \emph{The Journal of Chemical Physics}, 21\penalty0 (6):\penalty0
  1087--1092, 1953.

\bibitem[Murray and Sch{\"o}n(2018)]{murray2018automated}
L.~M. Murray and T.~B. Sch{\"o}n.
\newblock Automated learning with a probabilistic programming language:
  {B}irch.
\newblock \emph{Annual Reviews in Control}, 46:\penalty0 29--43, 2018.

\bibitem[Murray et~al.(2013)Murray, Jones, and Parslow]{murray2013disturbance}
L.~M. Murray, E.~M. Jones, and J.~Parslow.
\newblock On disturbance state-space models and the particle marginal
  metropolis-hastings sampler.
\newblock \emph{SIAM/ASA Journal on Uncertainty Quantification}, 1\penalty0
  (1):\penalty0 494--521, 2013.

\bibitem[Murray et~al.(2018)Murray, Lund{\'e}n, Kudlicka, Broman, and
  Sch{\"o}n]{murray2018delayed}
L.~M. Murray, D.~Lund{\'e}n, J.~Kudlicka, D.~Broman, and T.~B. Sch{\"o}n.
\newblock Delayed sampling and automatic {R}ao-{B}lackwellization of
  probabilistic programs.
\newblock In \emph{International Conference on Artificial Intelligence and
  Statistics}, pages 1037--1046, 2018.

\bibitem[Nascimento et~al.(2017)Nascimento, dos Reis, and
  Yang]{nascimento2017biologist}
F.~F. Nascimento, M.~dos Reis, and Z.~Yang.
\newblock A biologist’s guide to {B}ayesian phylogenetic analysis.
\newblock \emph{Nature Ecology \& Evolution}, 1\penalty0 (10):\penalty0
  1446--1454, 2017.

\bibitem[Neal(2011)]{neal2011mcmc}
R.~M. Neal.
\newblock {MCMC} using {H}amiltonian dynamics.
\newblock \emph{Handbook of Markov Chain Monte Carlo}, 2\penalty0
  (11):\penalty0 2, 2011.

\bibitem[Paige and Wood(2014)]{paige2014compilation}
B.~Paige and F.~Wood.
\newblock A compilation target for probabilistic programming languages.
\newblock In \emph{International Conference on Machine Learning}, pages
  1935--1943, 2014.

\bibitem[Pitt et~al.(2012)Pitt, dos Santos~Silva, Giordani, and
  Kohn]{pitt2012some}
M.~K. Pitt, R.~dos Santos~Silva, P.~Giordani, and R.~Kohn.
\newblock On some properties of {M}arkov chain {M}onte {C}arlo simulation
  methods based on the particle filter.
\newblock \emph{Journal of Econometrics}, 171\penalty0 (2):\penalty0 134--151,
  2012.

\bibitem[Rabosky and Goldberg(2015)]{rabosky2015model}
D.~L. Rabosky and E.~E. Goldberg.
\newblock Model inadequacy and mistaken inferences of trait-dependent
  speciation.
\newblock \emph{Systematic Biology}, 64\penalty0 (2):\penalty0 340--355, 2015.

\bibitem[Slater et~al.(2010)Slater, Price, Santini, and
  Alfaro]{slater2010diversity}
G.~J. Slater, S.~A. Price, F.~Santini, and M.~E. Alfaro.
\newblock Diversity versus disparity and the radiation of modern cetaceans.
\newblock \emph{Proceedings of the Royal Society of London B: Biological
  Sciences}, 277\penalty0 (1697):\penalty0 3097--3104, 2010.

\bibitem[Steeman et~al.(2009)Steeman, Hebsgaard, Fordyce, Ho, Rabosky, Nielsen,
  Rahbek, Glenner, S{\o}rensen, and Willerslev]{steeman2009radiation}
M.~E. Steeman, M.~B. Hebsgaard, R.~E. Fordyce, S.~Y. Ho, D.~L. Rabosky,
  R.~Nielsen, C.~Rahbek, H.~Glenner, M.~V. S{\o}rensen, and E.~Willerslev.
\newblock Radiation of extant cetaceans driven by restructuring of the oceans.
\newblock \emph{Systematic Biology}, 58\penalty0 (6):\penalty0 573--585, 2009.

\bibitem[Tolpin et~al.(2016)Tolpin, van~de Meent, Yang, and
  Wood]{tolpin2016design}
D.~Tolpin, J.-W. van~de Meent, H.~Yang, and F.~Wood.
\newblock Design and implementation of probabilistic programming language
  {A}nglican.
\newblock In \emph{Proceedings of the 28th Symposium on the Implementation and
  Application of Functional programming Languages}, pages 6:1--6:12. ACM, 2016.

\bibitem[Tran et~al.(2016)Tran, Kucukelbir, Dieng, Rudolph, Liang, and
  Blei]{Tran2016}
D.~Tran, A.~Kucukelbir, A.~B. Dieng, M.~Rudolph, D.~Liang, and D.~M. Blei.
\newblock Edward: A library for probabilistic modeling, inference, and
  criticism.
\newblock \emph{arXiv preprint arXiv:1610.09787}, 2016.

\end{thebibliography}
\endgroup

\clearpage

\begin{center}
\vspace*{3pc}
\Large\uppercase{\bf Supplementary material}
\end{center}

\clearpage

\appendix

\section{Proof of the unbiasedness of the marginal likelihood estimator of the extended APF}
\label{sec:proof}

In this section we prove that the marginal likelihood estimator
\begin{linenomath*}
\begin{equation*}
\widehat Z = \prod_{t=1}^T \frac{\sum\limits_{n=1}^N w_t^\n}{P_t - 1},
\end{equation*}
\end{linenomath*}
produced by the extended alive particle filter (APF) for the state space model (Figure~\ref{fig:ssm})
\begin{linenomath*}
\begin{align*}
x_0 &\sim p(x_0), \\
x_t &\sim f_t(x_t|x_{t-1}), \text{\ for\ }t=1,2,\dots, T,\\
y_t &\sim g_t(y_t|x_t),
\end{align*}
\end{linenomath*}
is unbiased in the sense that $\E[\widehat Z] = p(y_{1:T})$.

The structure of our proof is similar to that of \cite{pitt2012some} for the Auxiliary Particle Filter. Let $\F_t = \{x_t^\n, w_t^\n\}_{n=1}^N$ denote the internal state of the particle filter, i.e., the states and weights of all particles, at time~$t$. 

\begin{lemma}
\label{lemma:1}
\begin{linenomath*}
\begin{equation*}
\E\left[\frac{\sum_{n=1}^N w_t^\n}{P_t - 1} \middle| \F_{t-1} \right] = \sum_{n=1}^N \frac{w_{t-1}^\n}{\sum_{m=1}^N w_{t-1}^\m} p\left(y_t \middle| x_{t-1}^\n\right).
\end{equation*}
\end{linenomath*}
\end{lemma}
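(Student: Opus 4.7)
The plan is to reduce the left-hand side, conditional on $\F_{t-1}$, to a product of two expectations: one over a negative-binomial $P_t$ and one over a sum of iid conditional weights. The right-hand side of the lemma will then fall out from the expected weight of a single propagation step.

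First I would set up the iid-propagation model. Conditional on $\F_{t-1}$, each iteration of the inner \textbf{repeat} loop is an independent trial that draws an ancestor $a \sim \operatorname{Cat}(w_{t-1}^\m / \sum_l w_{t-1}^{(l)})$, propagates $x \sim f_t(\cdot \mid x_{t-1}^{(a)})$, and assigns weight $w = g_t(y_t \mid x)$. Writing $W_1, W_2, \dots$ for the iid stream of such weights produced across all trials of the outer loop, let $\alpha = \Pr(W_1 > 0 \mid \F_{t-1})$ and $\beta = \E[W_1 \mid W_1 > 0, \F_{t-1}]$. A direct calculation gives
\begin{equation*}
\alpha\beta \;=\; \E[W_1 \mid \F_{t-1}] \;=\; \sum_{n=1}^N \frac{w_{t-1}^\n}{\sum_m w_{t-1}^\m} \, p(y_t \mid x_{t-1}^\n),
\end{equation*}
which is exactly the right-hand side of the lemma, so it suffices to show that the left-hand side equals $\alpha\beta$.

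Next I would identify the joint law of $P_t$ and of the retained weights. The outer loop halts as soon as the $(N+1)$-th trial with $W_i > 0$ occurs, so $P_t$ is the index of the $(N+1)$-th success in an iid $\operatorname{Bernoulli}(\alpha)$ stream of indicators $\mathds{1}\{W_i > 0\}$; equivalently, $P_t$ has the negative-binomial PMF $\Pr(P_t = k \mid \F_{t-1}) = \binom{k-1}{N}\alpha^{N+1}(1-\alpha)^{k-N-1}$ for $k \geq N+1$. Let $Y_1, \dots, Y_N$ denote the values of the first $N$ successful trials, so that $\sum_{n=1}^N w_t^\n = \sum_{j=1}^N Y_j$. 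A standard Bernoulli-thinning argument gives: conditional on the full indicator sequence, the successful weights are iid with law $\mathcal{L}(W_1 \mid W_1 > 0)$, and this conditional law does not depend on the indicator pattern itself. Hence $(Y_1,\dots,Y_N)$ are iid with law $\mathcal{L}(W_1 \mid W_1 > 0)$ and are independent of $P_t$, giving $\E[\sum_j Y_j \mid \F_{t-1}] = N\beta$.

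The remaining ingredient is the identity
\begin{equation*}
\E\!\left[\frac{1}{P_t - 1} \,\middle|\, \F_{t-1}\right] = \frac{\alpha}{N},
\end{equation*}
which I would verify directly by applying $\frac{1}{k-1}\binom{k-1}{N} = \frac{1}{N}\binom{k-2}{N-1}$ to the PMF of $P_t$ and shifting the summation index so that what remains is the total mass of a $\operatorname{NegBin}(N, \alpha)$ law, which is one. Combining this with the independence of $\sum_j Y_j$ and $P_t$ yields
\begin{equation*}
\E\!\left[\frac{\sum_{n=1}^N w_t^\n}{P_t - 1} \,\middle|\, \F_{t-1}\right] = N\beta \cdot \frac{\alpha}{N} = \alpha\beta,
\end{equation*}
which matches the target. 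The main obstacle I anticipate is the independence of $\sum_j Y_j$ and $P_t$: it is intuitively clear but needs a careful conditioning argument on the indicator sequence, since the summands and the stopping time are both derived from the same iid stream $(W_i)$.
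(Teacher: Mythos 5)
Your proposal is correct and follows essentially the same route as the paper's proof: your $\alpha$ and $\beta$ are the paper's $p_{A_t}$ and $\E[w_t]$, the negative-binomial law of $P_t$ and the identity $\E[1/(P_t-1)\mid\F_{t-1}]=\alpha/N$ via the binomial-coefficient manipulation are identical, and the final combination $N\beta\cdot\alpha/N=\alpha\beta$ matches the paper's $N\E[w_t]\cdot p_{A_t}/N$. If anything, you are more careful than the paper on the one delicate point---the independence of the retained weights from $P_t$, which the paper simply asserts as ``$\E[w_t]$ does not depend on the value of $P_t$''---and your Bernoulli-thinning argument is the right way to justify it.
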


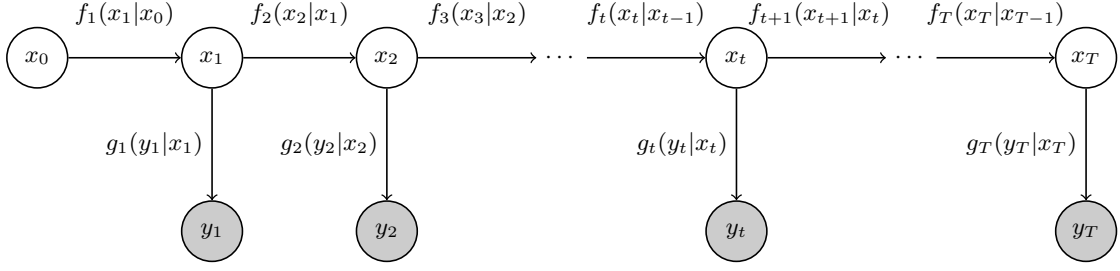
\begin{figure}
\centering
\tikzstyle{state}=[shape=circle,draw,inner sep=0pt,minimum size=8mm]
\tikzstyle{visible state}=[state,fill=black!20]
\begin{tikzpicture}[->,node distance=2.3cm,semithick]
\small
\node[state] (x0) {$x_0$};
\node[state] (x1) [right of=x0] {$x_1$};
\node[state] (x2) [right of=x1] {$x_2$};
\node (etc) [right of=x2] {$\cdots$};
\node[state] (xt) [right of=etc] {$x_t$};
\node (etc2) [right of=xt] {$\cdots$};
\node[state] (xT) [right of=etc2] {$x_T$};
\node[visible state] (y1) [below of=x1] {$y_1$};
\node[visible state] (y2) [below of=x2] {$y_2$};
\node[visible state] (yt) [below of=xt] {$y_t$};
\node[visible state] (yT) [below of=xT] {$y_T$};
\draw (x0) -- node[above,yshift=0.3cm]{$f_1(x_1|x_0)$} (x1);
\draw (x1) -- node[above,yshift=0.3cm]{$f_2(x_2|x_1)$} (x2);
\draw (x2) -- node[above,yshift=0.3cm]{$f_3(x_3|x_2)$} (etc);
\draw (etc) -- node[above,xshift=0cm,yshift=0.3cm]{$f_t(x_t|x_{t-1})$} (xt);
\draw (xt) -- node[above,xshift=-0.1cm,yshift=0.3cm]{$f_{t+1}(x_{t+1}|x_t)$} (etc2);
\draw (etc2) -- node[above,yshift=0.3cm]{$f_T(x_T|x_{T-1})$} (xT);
\draw (x1) -- node[left]{$g_1(y_1|x_1)$} (y1);
\draw (x2) -- node[left]{$g_2(y_2|x_2)$} (y2);
\draw (xt) -- node[left]{$g_t(y_t|x_t)$} (yt);
\draw (xT) -- node[left]{$g_T(y_T|x_T)$} (yT);
\end{tikzpicture}
\caption{Graphical model of the state space model.}
\label{fig:ssm}
\end{figure}

\begin{proof}
In the interest of brevity, we will omit conditioning on $\F_{t-1}$ in the notation. For each particle, the APF constructs a candidate sample $x'$ by drawing a sample from $\{x_{t-1}^\n\}$ with the probabilities proportional to the weights $\{w_{t-1}^\n\}$ and propagating it forward to time $t$ such that
\begin{linenomath*}
\begin{equation*}
x' \sim \sum_{n=1}^N \frac{w_{t-1}^\n}{\sum_{m=1}^N w_{t-1}^\m} f_t\left(x' \middle| x_{t-1}^\n\right). 
\end{equation*}
\end{linenomath*}
If $g_t(y_t|x') = 0$, the candidate sample is rejected and the procedure is repeated until acceptance (when $g_t(y_t|x') > 0$).

Let $A_t = \{x': g_t(y_t|x') > 0\}$.  The acceptance probability $p_{A_t}$ is then given by
\begin{linenomath*}
\begin{equation*} p_{A_t} = \int \mathbf{1}_{A_t}(x') \sum_{n=1}^N \frac{w_{t-1}^\n} {\sum_{m=1}^N w_{t-1}^\m} f_t\left(x' \middle| x_{t-1}^\n\right) dx',
\end{equation*}
\end{linenomath*}
where $\mathbf{1}$ denotes the indicator function.

The accepted samples are distributed according to the following distribution:
\begin{linenomath*}
\begin{equation*}
x_t \sim \frac{\mathbf{1}_{A_t}(x_t)}{p_{A_t}} \sum_{n=1}^N \frac{w_{t-1}^\n}{\sum_{m=1}^N w_{t-1}^\m} f_t\left(x_t \middle| x_{t-1}^\n\right).
\end{equation*}
\end{linenomath*}
The expected value of the weight $w_t = g_t(y_t|x_t)$ of an accepted sample is given by
\begin{linenomath*}
\begin{equation*}
\E[w_t] = \int g_t(y_t|x_t) \frac{\mathbf{1}_{A_t}(x_t)}{p_{A_t}} \sum_{n=1}^N \frac{w_{t-1}^\n}{\sum_{m=1}^N w_{t-1}^\m} f_t\left(x_t \middle| x_{t-1}^\n\right) dx_t.
\end{equation*}
\end{linenomath*}
The factor $\mathbf{1}_{A_t}(x_t)$ can be omitted since $\mathbf{1}_{A_t}(x_t) = 0 \Leftrightarrow g_t(y_t|x_t) = 0$, resulting in
\begin{linenomath*}
\begin{align*}
\E[w_t] &= \int \frac{1}{p_{A_t}} \sum_{n=1}^N \frac{w_{t-1}^\n}{\sum_{m=1}^N w_{t-1}^\m} f_t\left(x_t \middle| x_{t-1}^\n\right) g_t(y_t|x_t) dx_t \\ 
&= \frac{1}{p_{A_t}} \sum_{n=1}^N \frac{w_{t-1}^\n}{\sum_{m=1}^N w_{t-1}^\m} \int f_t\left(x_t \middle| x_{t-1}^\n\right) g_t(y_t|x_t) dx_t \\
&= \frac{1}{p_{A_t}} \sum_{n=1}^N \frac{w_{t-1}^\n}{\sum_{m=1}^N w_{t-1}^\m}\ p\left(y_t \middle| x_{t-1}^\n\right).
\end{align*}
\end{linenomath*}

The APF repeats drawing new samples until $N+1$ samples have been accepted. The total number of draws of candidate samples at time $t$, $P_t$, is itself a random variable distributed according to the negative binomial distribution
\begin{linenomath*}
\begin{equation*}
P(P_t = D) = \binom{D-1}{(N+1)-1} p_{A_t}^{N+1} (1-p_{A_t})^{D-(N+1)}
\end{equation*}
\end{linenomath*}
with the support $D \in \{N+1, N+2, N+3, \dots\}$.

Finally, using the fact that $\E[w_t]$ does not depend on the value of $P_t$,
\begin{linenomath*}
\begin{align*}
\E\left[\frac{\sum_{n=1}^N w_t^\n}{P_t - 1}\right] &= \sum_{D=N+1}^\infty \frac{N \E[w_t]}{D-1} P(P_t = D) = \sum_{D=N+1}^\infty \frac{N \E[w_t]}{D-1} \binom{D-1}{N} p_{A_t}^{N+1} (1-p_{A_t})^{D-(N+1)} \\
&= N \E[w_t] \sum_{D=N+1}^\infty \frac{1}{D-1} \binom{D-1}{N} p_{A_t}^{N+1} (1-p_{A_t})^{D-(N+1)} \\
&= N \E[w_t] \sum_{D=N+1}^\infty \frac{1}{D-1} \frac{(D-1) (D-2)!}{N (N-1)! (D-(N+1))!} p_{A_t}^{N+1} (1-p_{A_t})^{D-(N+1)} \\
&= \E[w_t] p_{A_t}^{N+1} \sum_{D=N+1}^\infty \binom{D-2}{D-(N+1)} (1-p_{A_t})^{D-(N+1)} \\
& \text{(using the binomial theorem)}\\
&= \E[w_t] p_{A_t}^{N+1} p_{A_t}^{-N} = \frac{1}{p_{A_t}} \sum_{n=1}^N \frac{w_{t-1}^\n}{\sum_{m=1}^N w_{t-1}^\m}\ p\left(y_t \middle| x_{t-1}^\n\right) p_{A_t} \\
&= \sum_{n=1}^N \frac{w_{t-1}^\n}{\sum_{m=1}^N w_{t-1}^\m}\ p\left(y_t \middle| x_{t-1}^\n\right).
\end{align*}
\end{linenomath*}
\end{proof}

\begin{lemma}
\label{lemma:2}
\begin{linenomath*}
\begin{equation*}
\E\left[\frac{\sum_{n=1}^N w_t^\n p\left(y_{t+1:t'} \middle| x_t^\n\right)}{P_t - 1} \middle| \F_{t-1} \right] = \sum_{n=1}^N \frac{w_{t-1}^\n}{\sum_{m=1}^N w_{t-1}^\m} p\left(y_{t:t'} \middle| x_{t-1}^\n\right).
\end{equation*}
\end{linenomath*}
\end{lemma}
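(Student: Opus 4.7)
The plan is to mirror the proof of Lemma~\ref{lemma:1} with one additional multiplicative factor $p(y_{t+1:t'}|x_t^\n)$ threaded through every step. The key observation is that this factor depends only on $x_t^\n$, so it integrates cleanly against the transition density $f_t(\cdot|x_{t-1}^\n)$ using the Markov structure of the state-space model in Figure~\ref{fig:ssm}.

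First, I would condition on $\F_{t-1}$ (suppressing it from notation as in Lemma~\ref{lemma:1}) and reuse the description of the accept-reject loop verbatim. The acceptance region $A_t = \{x : g_t(y_t|x) > 0\}$, the acceptance probability $p_{A_t}$, the conditional distribution of a single accepted particle, and the negative-binomial distribution of $P_t$ (with parameters $N+1$ and $p_{A_t}$, independent of the accepted samples) are all unchanged. The accepted samples $x_t^1,\dots,x_t^N$ are conditionally i.i.d.\ from the same on-$A_t$ mixture as in Lemma~\ref{lemma:1}.

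Second, I would compute the expectation of the augmented per-particle contribution $w_t \, p(y_{t+1:t'}|x_t)$. Since $w_t = g_t(y_t|x_t)$ vanishes off $A_t$, the indicator can be dropped and the calculation from Lemma~\ref{lemma:1} goes through with the extra factor $p(y_{t+1:t'}|x_t)$ riding inside the integral:
$$\E\bigl[w_t \, p(y_{t+1:t'}|x_t)\bigr] = \frac{1}{p_{A_t}} \sum_{n=1}^N \frac{w_{t-1}^\n}{\sum_m w_{t-1}^\m} \int g_t(y_t|x_t)\, p(y_{t+1:t'}|x_t)\, f_t(x_t|x_{t-1}^\n)\, dx_t.$$
By the Markov structure of the model, the inner integral collapses to $p(y_{t:t'}|x_{t-1}^\n)$.

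Third, conditional i.i.d.-ness of the accepted samples and independence of $P_t$ from them give
$$\E\!\left[\frac{\sum_{n=1}^N w_t^\n p(y_{t+1:t'}|x_t^\n)}{P_t - 1}\right] = N\, \E\bigl[w_t\, p(y_{t+1:t'}|x_t)\bigr]\, \E\!\left[\frac{1}{P_t-1}\right],$$
and the identical negative-binomial/binomial-theorem manipulation used in Lemma~\ref{lemma:1} shows $N\, \E[1/(P_t-1)] = p_{A_t}$. The $p_{A_t}$ cancels the $1/p_{A_t}$ from the previous display, yielding the claimed right-hand side. The main obstacle is really just the one-line identity $\int g_t(y_t|x_t)\, p(y_{t+1:t'}|x_t)\, f_t(x_t|x_{t-1}^\n)\,dx_t = p(y_{t:t'}|x_{t-1}^\n)$, which follows from the conditional independence of $y_{t:t'}$ and $x_{t-1}$ given $x_t$ together with the chain rule; once it is in hand, the combinatorial bookkeeping is inherited from Lemma~\ref{lemma:1} with $w_t$ replaced by $w_t\, p(y_{t+1:t'}|x_t)$.
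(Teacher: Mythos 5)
Your proposal is correct and follows essentially the same route as the paper's proof: compute $\E\bigl[w_t\,p(y_{t+1:t'}|x_t)\bigr]$ by threading the extra factor through the integral of Lemma~\ref{lemma:1} (collapsing it to $p(y_{t:t'}|x_{t-1}^\n)$ via the Markov structure), then apply the identical negative-binomial summation to cancel the $1/p_{A_t}$. The only cosmetic difference is that you factor out $N\,\E[1/(P_t-1)]=p_{A_t}$ explicitly as a separate identity, whereas the paper carries the sum over $D$ through the whole display.
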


\begin{proof}
Similar to the proof of Lemma \ref{lemma:1} we have that
\begin{linenomath*}
\begin{align*}
\E[w_t p(y_{t+1:t'}|x_t)] &= \int \frac{1}{p_{A_t}} \sum_{n=1}^N \frac{w_{t-1}^\n}{\sum_{m=1}^N w_{t-1}^\m} f_t\left(x_t \middle| x_{t-1}^\n\right) g_t(y_t|x_t) p(y_{t+1:t'}|x_t) dx_t \\ 
&= \frac{1}{p_{A_t}} \sum_{n=1}^N \frac{w_{t-1}^\n}{\sum_{m=1}^N w_{t-1}^\m} \int f_t\left(x_t \middle| x_{t-1}^\n\right) g_t(y_t|x_t) p(y_{t+1:t'}|x_t) dx_t \\
&= \frac{1}{p_{A_t}} \sum_{n=1}^N \frac{w_{t-1}^\n}{\sum_{m=1}^N w_{t-1}^\m}\ p\left(y_{t:t'} \middle| x_{t-1}^\n\right)
\end{align*}
\end{linenomath*}
and using this result we have that
\begin{linenomath*}
\begin{align*}
\E\left[\frac{\sum_{n=1}^N w_t^\n p\left(y_{t+1:t'} \middle| x_t^\n\right)}{P_t - 1}\right] &= \sum_{D=N+1}^\infty \frac{N \E[w_t p(y_{t+1:t'}|x_t)]}{D-1} \binom{D-1}{N} p_{A_t}^{N+1} (1-p_{A_t})^{D-(N+1)} \\
&= N \E[w_t p(y_{t+1:t'}|x_t)] \sum_{D=N+1}^\infty \frac{1}{D-1} \binom{D-1}{N} p_{A_t}^{N+1} (1-p_{A_t})^{D-(N+1)} \\
&= N \E[w_t p(y_{t+1:t'}|x_t)] \frac{p_{A_t}}{N} = N \frac{1}{p_{A_t}} \sum_{n=1}^N \frac{w_{t-1}^\n}{\sum_{m=1}^N w_{t-1}^\m}\ p\left(y_{t:t'} \middle| x_{t-1}^\n\right) \frac{p_{A_t}}{N} \\
&= \sum_{n=1}^N \frac{w_{t-1}^\n}{\sum_{m=1}^N w_{t-1}^\m}\ p\left(y_{t:t'} \middle| x_{t-1}^\n\right).
\end{align*}
\end{linenomath*}
\end{proof}

\begin{lemma}
\label{lemma:3}
\begin{linenomath*}
\begin{equation*}
\E\left[\prod_{t'=t-h}^t \frac{\sum_{n=1}^N w_{t'}^\n}{P_{t'} - 1} \middle| \F_{t-h-1} \right] = \sum_{n=1}^N \frac{w_{t-h-1}^\n}{\sum_{m=1}^N w_{t-h-1}^\m} p\left(y_{t-h:t} \middle| x_{t-h-1}^\n\right).
\end{equation*}
\end{linenomath*}
\end{lemma}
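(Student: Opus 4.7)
My plan is to prove Lemma \ref{lemma:3} by induction on $h$, with Lemma \ref{lemma:1} furnishing the base case and Lemma \ref{lemma:2} doing the work at the inductive step.

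For the base case $h=0$, the claimed equation reduces verbatim to Lemma \ref{lemma:1}, so nothing new needs to be shown. For the inductive step, suppose the statement holds for $h-1$, i.e.,
\begin{linenomath*}
\begin{equation*}
\E\left[\prod_{t'=t-h+1}^t \frac{\sum_{n=1}^N w_{t'}^\n}{P_{t'} - 1} \middle| \F_{t-h} \right] = \sum_{n=1}^N \frac{w_{t-h}^\n}{\sum_{m=1}^N w_{t-h}^\m} p\left(y_{t-h+1:t} \middle| x_{t-h}^\n\right).
\end{equation*}
\end{linenomath*}
The plan is to apply the tower property of conditional expectation by conditioning first on $\F_{t-h}$ (which contains $\F_{t-h-1}$, since the APF dynamics from step $t-h$ onwards depend only on $\F_{t-h}$). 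Since the factor $\sum_n w_{t-h}^\n / (P_{t-h}-1)$ is measurable with respect to $\F_{t-h}$, it comes out of the inner conditional expectation; what remains inside is exactly the quantity handled by the inductive hypothesis.

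After that substitution, the expression inside the outer expectation becomes
\begin{linenomath*}
\begin{equation*}
\frac{\sum_{n=1}^N w_{t-h}^\n}{P_{t-h}-1} \cdot \sum_{n=1}^N \frac{w_{t-h}^\n}{\sum_{m=1}^N w_{t-h}^\m} p\left(y_{t-h+1:t} \middle| x_{t-h}^\n\right) = \frac{\sum_{n=1}^N w_{t-h}^\n\, p\left(y_{t-h+1:t} \middle| x_{t-h}^\n\right)}{P_{t-h}-1},
\end{equation*}
\end{linenomath*}
where the normalising factor $\sum_m w_{t-h}^\m$ cancels cleanly. This is precisely the expression whose conditional expectation given $\F_{t-h-1}$ is computed in Lemma \ref{lemma:2} (with the index $t$ of that lemma set to $t-h$ and $t'$ set to $t$), yielding
\begin{linenomath*}
\begin{equation*}
\sum_{n=1}^N \frac{w_{t-h-1}^\n}{\sum_{m=1}^N w_{t-h-1}^\m}\, p\left(y_{t-h:t} \middle| x_{t-h-1}^\n\right),
\end{equation*}
\end{linenomath*}
which completes the induction.

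The proof is essentially routine once the cancellation above is spotted; the main thing to be careful about is the measurability step. Specifically, I need to verify that $P_{t-h}$ and the weights $w_{t-h}^\n$ are $\F_{t-h}$-measurable (which is immediate from the definition of $\F_{t-h}$), and that the future draws generating $P_{t-h+1},\ldots,P_t$ and $w_{t-h+1}^\n,\ldots,w_t^\n$ depend on the past only through $\F_{t-h}$, so that the inductive hypothesis may legitimately be applied inside the tower expansion. Apart from that bookkeeping, Lemmas \ref{lemma:1} and \ref{lemma:2} already contain all the non-trivial algebra (the negative-binomial sum producing the cancellation with $p_{A_t}$), and the present lemma just iterates that one-step result.
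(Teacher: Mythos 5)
Your proof is correct and follows essentially the same route as the paper's: induction on $h$ with Lemma \ref{lemma:1} as the base case, a tower-property expansion that pulls out the $\F$-measurable factor $\sum_n w^\n/(P-1)$, the cancellation of the normalising constant $\sum_m w^\m$, and an application of Lemma \ref{lemma:2} to close the induction (the only difference being your $h-1\to h$ indexing versus the paper's $h\to h+1$). Your explicit remark about the conditional-independence/measurability bookkeeping is a point the paper leaves implicit, but it does not change the argument.
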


\begin{proof}
By induction.

The base step for $h=0$ was proved in Lemma \ref{lemma:1}.

In the induction step, let us assume that the equality holds for $h$ and prove it for $h+1$:
\begin{linenomath*}
\begin{align*}
\E\left[\prod_{t'=t-h-1}^t \frac{\sum_{n=1}^N w_{t'}^\n}{P_t' - 1} \middle| \F_{t-h-2} \right]
&= \E\left[\E\left[\prod_{t'=t-h}^t \frac{\sum_{n=1}^N w_{t'}^\n}{P_t' - 1} \middle| \F_{t-h-1} \right] \frac{\sum_{n=1}^N w_{t-h-1}^\n}{P_{t-h-1} - 1} \middle| \F_{t-h-2} \right]\\
& \text{(using the induction assumption)}\\
&= \E\left[   \sum_{n=1}^N \frac{w_{t-h-1}^\n}{\sum_{m=1}^N w_{t-h-1}^\m} p\left(y_{t-h:t} \middle| x_{t-h-1}^\n\right) \frac{\sum_{n=1}^N w_{t-h-1}^\n}{P_{t-h-1} - 1} \middle| \F_{t-h-2} \right] \\
&= \E\left[   \sum_{n=1}^N \frac{w_{t-h-1}^\n}{P_{t-h-1} - 1} p\left(y_{t-h:t} \middle| x_{t-h-1}^\n\right) \middle| \F_{t-h-2} \right] \\
& \text{(using Lemma \ref{lemma:2})}\\
&= \sum_{n=1}^N \frac{w_{t-h-2}^\n}{\sum_{m=1}^N w_{t-h-2}^\m} p\left(y_{t-h-1:t} \middle| x_{t-h-2}^\n\right).
\end{align*}
\end{linenomath*}
\end{proof}

\begin{theorem}
\begin{linenomath*}
\begin{equation*}
\E\left[\prod_{t=1}^T \frac{\sum\limits_{n=1}^N w_t^\n}{P_t - 1}\right] = p(y_{1:T}).
\end{equation*}
\end{linenomath*}
\end{theorem}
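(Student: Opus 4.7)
My plan is to derive the theorem as a direct corollary of Lemma~3 by taking the largest possible telescoping depth $h=T-1$ and then handling the base case at $t=0$.

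First I would apply Lemma~3 with $t=T$ and $h=T-1$, which gives
\begin{equation*}
\E\left[\prod_{t=1}^T \frac{\sum_{n=1}^N w_t^\n}{P_t - 1} \;\middle|\; \F_0 \right] = \sum_{n=1}^N \frac{w_0^\n}{\sum_{m=1}^N w_0^\m}\, p\left(y_{1:T} \middle| x_0^\n\right).
\end{equation*}
This step is pure bookkeeping: Lemma~3 was proved by induction exactly to collapse the whole product of ratios into a single weighted sum of conditional likelihoods evaluated at the initial particles.

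Next I would use the initialization of the APF, namely $w_0^\n = 1$ for every $n$, so that the normalized weights reduce to $1/N$. Taking the unconditional expectation via the tower property,
\begin{equation*}
\E\left[\prod_{t=1}^T \frac{\sum_{n=1}^N w_t^\n}{P_t - 1}\right] = \E\left[\frac{1}{N}\sum_{n=1}^N p\left(y_{1:T} \middle| x_0^\n\right)\right].
\end{equation*}
Since the initial states $x_0^\n$ are i.i.d.\ draws from the prior $p(x_0)$ (performed inside the first \textsc{Propagate} call, before any weight is ever assigned), each term in the average satisfies
\begin{equation*}
\E\!\left[p\left(y_{1:T} \middle| x_0^\n\right)\right] = \int p(y_{1:T}|x_0)\, p(x_0)\, dx_0 = p(y_{1:T}),
\end{equation*}
so the $1/N$ factor cancels against the $N$ identically-distributed summands and the theorem follows.

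The three-paragraph sketch above is really just glue: the genuine work happens in Lemma~1 (handling the negative-binomial distribution of $P_t$ to turn $1/(P_t-1)$ into the inverse acceptance probability $1/p_{A_t}$, which cancels against the rejection-sampling bias) and in the inductive step of Lemma~3. The only subtlety I would flag is the handling of $\F_0$: one must be careful that the algorithm's "empty" initial particles are propagated through the prior $p(x_0)$ before being weighted, so that the i.i.d.\ draw assumption in the last display is legitimate. Once that is spelled out, no further computation is needed.
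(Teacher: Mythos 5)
Your proposal matches the paper's own proof, which likewise applies Lemma~3 with $t=T$, $h=T-1$, uses the uniform initial weights $w_0^\n=1$ to reduce the conditional expectation to $\frac{1}{N}\sum_{n=1}^N p(y_{1:T}\mid x_0^\n)$, and concludes by averaging over the i.i.d.\ prior draws of $x_0^\n$. The paper states this in one line; your version simply spells out the same steps (and your remark about the initial particles being propagated through the prior is a reasonable clarification, not a deviation).
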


\begin{proof}
Using Lemma \ref{lemma:3} with $t=T, h=T-1$ and
\begin{linenomath*}
\begin{equation*}
\E\left[\frac{1}{N} \sum_{n=1}^N p\left(y_{1:T} \middle| x_0^\n\right)\right] = p(y_{1:T}).
\end{equation*}
\end{linenomath*}
\end{proof}

\section{Generative model for CRBD}
\label{sec:gen_crbd}

The pseudocode for generating phylogenetic trees using the CRBD model is listed in Algorithm \ref{alg:gen_crbd}.

\begin{algorithm}
\caption{Pseudocode for generating trees using the CRBD model.}
\label{alg:gen_crbd}
\begin{algorithmic}
\Function{crbd}{$\tau_\text{orig}$}
    \State \Return ($\tau_\text{orig}$, \{\Call{crbd'}{$\tau_\text{orig}$}\})
\EndFunction
\\
\Function{crbd'}{$\tau$}
\State $\Delta \sim \operatorname{Exponential}(\lambda + \mu)$
\State $\tau' \gets \tau - \Delta$
\If {$\tau' < 0$}
    \State \Return (0, $\varnothing$)
\EndIf
\State $e \sim \operatorname{Cat}\left(p_1=\frac{\lambda}{\lambda + \mu}, p_2=\frac{\mu}{\lambda + \mu}\right)$
\If {$e = 1$}
    \State \Return ($\tau'$, \{\Call{crbd'}{$\tau'$}, \Call{crbd'}{$\tau'$}\})
\Else
    \State \Return ($\tau'$, $\varnothing$)
\EndIf
\EndFunction
\end{algorithmic}

\end{algorithm}

\section{Relevant conjugacy relationships}
\label{sec:distributions}

\subsection{Negative binomial and Lomax distribution}

\subsubsection*{Negative binomial distribution}

Parameters: number of successes $k>0$ before the experiment is stopped, probability of success $p\in(0, 1)$

Probability mass function:
\begin{linenomath*}
\begin{align*}
f(r|k, p) &= \binom{r+k-1}{k-1} p^k (1-p)^r \text{\ for\ }r\in \mathbb {N} \cup \{0\},
\end{align*}
\end{linenomath*}
where $r$ is the number of failures.

\subsubsection*{Lomax distribution}

Parameters: scale $\lambda>0$, shape $\alpha>0$

Probability density function:
\begin{linenomath*}
\begin{equation*}
f(\Delta|\lambda, \alpha)= \frac{\alpha}{\lambda } \left(1+\frac{\Delta}{\lambda}\right)^{-(
\alpha+1)} \text{\ for\ } \Delta \geq 0
\end{equation*}
\end{linenomath*}

\subsection{Conjugacy relationships}

\subsubsection*{Gamma-Poisson mixture}

Prior distribution: $\nu \sim \operatorname{Gamma}(k, \theta)$ with the probability density function
\begin{linenomath*}
\begin{equation*}
f(\nu|k, \theta) = \frac{1}{\Gamma(k)\theta^{k}} \nu^{k-1} e^{-\nu/\theta} \text{\ for\ } \nu > 0
\end{equation*}
\end{linenomath*}

Likelihood: $n \sim \operatorname{Poisson}(\nu \Delta)$ with the probability mass function
\begin{linenomath*}
\begin{equation*}
f(n|\lambda) = \frac{\lambda^n}{n!} e^{-\lambda} \text{\ for\ } n\in \mathbb {N} \cup \{0\},
\end{equation*}
\end{linenomath*}
where $\lambda = \nu \Delta$.

Prior predictive distribution ($k \in \mathbb{N}$):
\begin{linenomath*}
\begin{align*}
f(n|k, \theta) &= \int_0^\infty \frac{1}{\Gamma(k)\theta^{k}} \nu^{k-1} e^{-\nu/\theta} \times \frac{(\nu \Delta)^n}{n!} e^{-\nu \Delta} d\nu = \frac{\Delta^n}{n! (k-1)! \theta^k} \int_0^\infty \nu^{n+k-1} e^{-\nu(1/\theta+\Delta)} d\nu \\
&= \frac{\Delta^n}{n! (k-1)! \theta^k} \left(\frac{1}{\theta} + \Delta\right)^{-(n+k)} (n+k-1)! = \binom{n+k-1}{k-1} \left(\frac{1}{1+\Delta\theta}\right)^k \left(1-\frac{1}{1+\Delta\theta}\right)^n \\
n|k, \theta &\sim \operatorname{NegativeBinomial}\left(k, \frac{1}{1+\Delta\theta}\right)
\end{align*}
\end{linenomath*}

Posterior distribution:
\begin{linenomath*}
\begin{align*}
f(\nu|n) &\propto \frac{1}{\Gamma(k)\theta^{k}} \nu^{k-1} e^{-\nu/\theta} \times \frac{(\nu \Delta)^n}{n!} e^{-\nu \Delta} \propto \nu^{k+n-1} e^{-\nu\left(1/\theta+\Delta\right)} = \nu^{(k+n)-1} e^{-\nu/\left(\frac{\theta}{1+\Delta\theta}\right)} \\
\nu|n &\sim \operatorname{Gamma}\left(k + n, \frac{\theta}{1 + \Delta \theta}\right)
\end{align*}
\end{linenomath*}

\subsubsection*{Gamma-exponential mixture}

Prior distribution: $\nu \sim \operatorname{Gamma}(k, \theta)$

Likelihood: $\Delta \sim \operatorname{Exponential}(\nu)$ with the probability density function
\begin{linenomath*}
\begin{equation*}
f(\Delta|\nu) = \nu e^{-\nu \Delta} \text{\ for\ } \Delta \geq 0
\end{equation*}
\end{linenomath*}

Prior predictive distribution:
\begin{linenomath*}
\begin{align*}
f(\Delta|k, \theta) &= \int_0^\infty \frac{1}{\Gamma(k)\theta^{k}} \nu^{k-1} e^{-\nu/\theta} \times \nu e^{-\nu \Delta} d\nu = \frac{1}{\Gamma(k)\theta^k} \int_0^\infty \nu^k e^{-\nu(1/\theta+\Delta)} d\nu \\
& = \frac{1}{\Gamma(k)\theta^k} \left(\frac{1}{\theta}+\Delta\right)^{-(k+1)} \Gamma(k+1) = \frac{k}{\theta^k} \left(\frac{1}{\theta}+\Delta\right)^{-(k+1)} = k\theta (1+\Delta\theta)^{-(k+1)} \\
\Delta|k, \theta &\sim \operatorname{Lomax}\left(\frac{1}{\theta}, k\right)
\end{align*}
\end{linenomath*}

Posterior distribution:
\begin{linenomath*}
\begin{align*}
f(\nu|\Delta) &\propto \frac{1}{\Gamma(k)\theta^{k}} \nu^{k-1} e^{-\nu/\theta} \times \nu e^{-\nu \Delta} \propto \nu^k e^{-\nu(1/\theta+\Delta)} = \nu^{(k+1)-1} e^{-\nu/\left(\frac{\theta}{1+\Delta\theta}\right)} \\
\nu|\Delta &\sim \operatorname{Gamma}\left(k + 1, \frac{\theta}{1 + \Delta \theta}\right)
\end{align*}
\end{linenomath*}

\section{Source code}

Birch is available at\\[0.25pc]
\url{https://birch-lang.org/}    

The source code for the CRBD and BiSSE models is available at\\[0.25pc]
\url{https://github.com/kudlicka/paper-2019-probabilistic}

\begin{figure}[h]
\centering
\includegraphics[width=\textwidth]{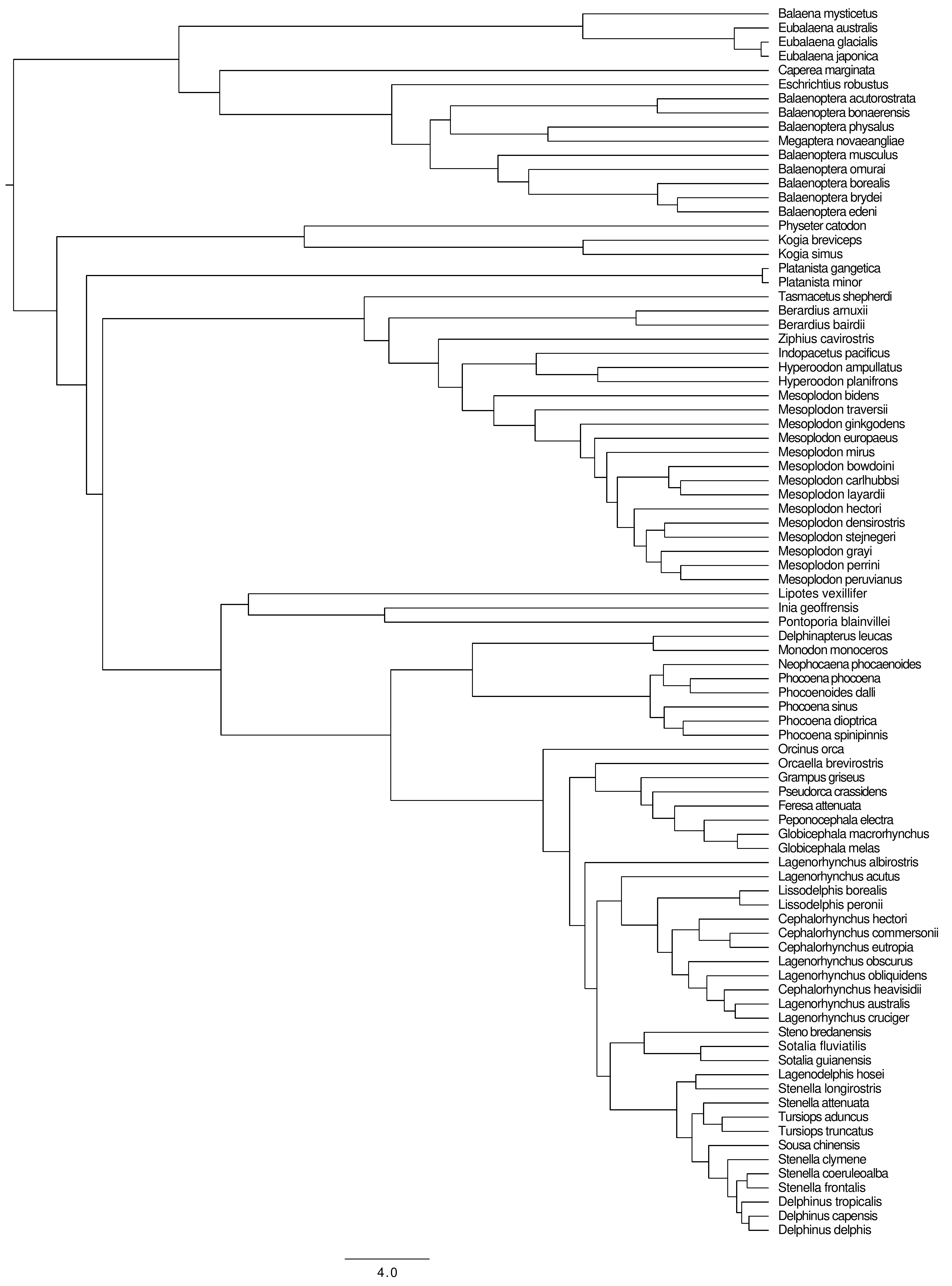}
\caption{Phylogeny of cetaceans (whales, dolphins and porpoises).}
\label{fig:cetaceans}
\end{figure}

\end{document}